\pgfplotsset{compat=1.18}
\g@addto@macro{\UrlBreaks}{\do\/\do-\do.\do\_}
\newtheorem{corollary}{\bf Corollary}
\newtheorem{proposition}{\bf Proposition}
\newtheorem{lemma}{\bf Lemma}
\newenvironment{proof}[1][Proof]{\noindent\textbf{#1} }{\ \rule{0.5em}{0.5em}}
\pgfplotsset{compat=1.18}
\DeclareMathOperator{\dd}{\textrm{d}\!}
\begin{document}
\title{Robust Tournaments
}
\author{Mikhail Drugov\thanks{New Economic School, Universitat Autònoma de Barcelona and Barcelona School of Economics (BSE), and CEPR, \url{mdrugov@nes.ru}.} \and Dmitry Ryvkin\thanks{Economics Discipline Group, School of Economics, Finance and Marketing, RMIT University,  \url{d.ryvkin@gmail.com}.}}
\date{This version: \today}

\maketitle

\begin{abstract}
\noindent We characterize robust tournament design---the prize scheme that maximizes the lowest effort in a rank-order tournament where the distribution of noise is unknown, except for an upper bound, $\bar{H}$, on its Shannon entropy. The robust tournament scheme awards positive prizes to all ranks except the last, with a distinct top prize. Asymptotically, the prizes follow the harmonic number sequence and induce an exponential distribution of noise with rate parameter $e^{-\bar{H}}$. The robust prize scheme is highly unequal, especially in small tournaments, but becomes more equitable as the number of participants grows, with the Gini coefficient approaching $1/2$.

\bigskip

\noindent{\textbf Keywords}: robust tournament design, max-min, prize allocation

\noindent{\textbf JEL codes}: C72, D82, D86, M52
\end{abstract}

\newpage
\onehalfspacing

\section{Introduction}

Tournaments are used extensively to incentivize effort in environments where output is noisy but relative performance is observable. Examples include innovation races, the allocation of bonuses and promotions in firms, sales contests, and sports. An important challenge in designing such mechanisms is that the principal often does not know the characteristics of the noise, which can vary widely across settings. In this paper, we study \emph{robust tournament design}, where the prize structure is chosen to maximize the minimum effort exerted by players, given only an upper bound on the entropy of the noise distribution. We characterize the optimal prize scheme under this criterion and show that it assigns positive prizes to all but the lowest-ranked participant, with a distinct top prize. Asymptotically, the robust prize scheme is convex, with prize differentials following a harmonic sequence.

A central question in tournament design is how to allocate prizes across ranks to maximize effort. A long-standing debate concerns whether a winner-take-all (WTA) scheme, whereby the entire prize budget is awarded to the top performer, is optimal, or effort is better incentivized by awarding prizes more broadly. Significant prize inequality is observed across sectors, evidenced by the profiles of CEO compensation \citep{Song-et-al:2019} or prize profiles in professional sports such as tennis \citep{Gilsdorf-Sukhatme:2008} or poker \citep{Levitt-Miles:2014}. There is, however, evidence that the impact of pay inequality on firm performance is hump-shaped, i.e., too little or too much inequality in prizes can be detrimental \citep[for a review see, e.g.,][]{Chung-et-al:2025}. Overall, the question of the ``right'' level of prize inequality remains far from settled. Our results suggest that a substantial---though not excessive---degree of inequality is necessary to safeguard the principal in uncertain environments.

The impact of prize allocation on effort has been studied using various contest models, including complete-information all-pay auctions \citep[e.g.,][]{Fang-et-al:2020}, incomplete-information settings \citep[e.g.,][]{Moldovanu-Sela:2001}, tournaments with exogenous noise \citep[e.g.,][]{Drugov-Ryvkin:2020_prizes}, endogenous noise \citep{Kim-et-al:2025}, and settings with career concerns \citep[e.g.,][]{Correa-Yildirim:2024}. These models yield different conclusions---favoring WTA in some cases and recommending prize sharing in others---depending on features of the environment.
A common assumption in this literature is that the contest designer has detailed knowledge of the decision environment the agents face. In particular, when tournaments involve noisy performance (with exogenous noise), the designer is assumed to know the exact distribution of noise. Yet this assumption is often implausible. For example, firm executives may be unaware of the conditions workers face or the nature of their daily tasks; and university administrators may not understand the uncertainties confronting researchers. In such cases, prize schemes that depend on precise knowledge of noise may perform poorly. This motivates a robust design approach that guarantees a lower bound on performance regardless of the precise nature of the noise.


We fully characterize the robust prize schedule as a unique solution to a convex programming problem. The robust tournament awards positive prizes to all ranks except the last, with a distinct top prize. This is in contrast to the optimal prize structure for a known distribution of noise, which awards some number of equal prizes at the top and zero prizes to the remaining ranks \citep{Drugov-Ryvkin:2020_prizes}. For intuition, note that tournament incentives are generated by \emph{prize differentials} between adjacent ranks. Indeed, a marginal increase in effort raises the agent's earnings to the extent it helps the agent surpass a competitor who is ranked just above. If that happens, the reward is proportional to the difference between prizes for these ranks. Because the equilibrium is symmetric, output is effectively ranked the same as the order statistics of noise. Thus, prize differentials at different ranks ``activate'' different parts of the noise distribution in regions where the mass of the corresponding order statistics is concentrated. When the distribution of noise is fixed, there is, generically, one optimal region where the prize differential should be created; hence, the prize structure with equal prizes at the top. However, such a prize structure is not robust because it is easy for the ``adversary'' to construct a noise distribution with low density in the corresponding region to suppress incentives. To prevent this from happening, a robust prize structure must have positive prize differentials in many places, especially when the number of agents is large.  


That said, there are tradeoffs. Because of the entropy constraint, it is very costly for the adversary to have regions of low density in the (hypothetical) noise distribution. Besides, when the number of agents is small, adjacent order statistics overlap, and having a zero prize differential does not completely shut down incentives in the corresponding region. Therefore, in small tournaments the robust prize schedule is close to the winner-take-all, and more generally contains sequences of equal prizes. For example, in tournaments with $n=3$ agents, the ratio of the first and the second prize is approximately 10:1; and with $n=4$, this ratio is 8:1, and the third prize is equal to the second.  

However, as the number of agents increases, the order statistics of noise become more isolated and it becomes easier for the adversary to carefully select the distribution of noise that destroys incentives in the presence of zero prize differentials. Therefore, the proportion of zero prize differentials declines with the tournament size, and converges to zero for $n\to\infty$. That is, in large tournaments all robust optimal prizes are distinct. We show that this asymptotically optimal prize schedule follows a harmonic number sequence, and the corresponding adversarial distribution of noise converges to exponential.


While there is no clear inequality ranking of robust prize allocations for different tournament sizes, we find that the Gini coefficient declines, i.e., prizes become more equitable, as $n$ increases. The asymptotically optimal prize schedule has the Gini coefficient of 1/2, which corresponds to a substantial but not extreme level of inequality.

\paragraph{Related literature} Our paper contributes to a growing literature on robust mechanism design, which seeks to develop mechanisms that perform well under limited or ambiguous information. Prior work has focused on relaxing common knowledge assumptions about agents’ types \citep[e.g.,][]{Bergemann-Morris:2005}, available actions \citep[e.g.,][]{Carroll:2015}, and other aspects of the environment \citep[see][for a review]{Carroll:2019}. For example, in the auction literature, \cite{Che:2022} and \cite{Suzdaltsev:2022} assume that while the distribution of values is common knowledge among the bidders, the auction designer has only partial information about it, such as a moment. We study a related setting in which the principal is uncertain about the environment itself---in particular, about the process that maps effort into observed performance. This can be viewed as asymmetric information about \emph{technology}, where the noise acts as a stochastic component of production. For instance, the randomness researchers face in the publication process can be interpreted as part of a production function that university administrators only partially understand. Our goal is to characterize tournament structures that are robust to this kind of uncertainty.\footnote{A natural interpretation of the robust tournament design approach is that of an ambiguity averse principal with the Maxmin Expected Utility preferences \citep{Gilboa-Schmeidler:1989}, whose set of possible priors about the distribution of noise is the set of all distributions with entropy at most $\bar{H}$.}

The rest of the paper is structured as follows. In Section \ref{sec:model}, we set up the model and characterize the equilibrium for a fixed distribution of noise. In Section \ref{sec:robust_design}, we introduce the robust tournament design problem and characterize its solution. Section \ref{sec:main_results} presents our main results on the properties of robust optimal prizes, intuition, a detailed discussion of small tournaments (with $n=3$ and 4 agents), large tournaments (for which we identify an asymptotically robust prize schedule), and prize inequality. Section \ref{sec:conclusions} concludes.

\section{The model and preliminaries}
\label{sec:model}

\paragraph{The environment} We consider tournaments with $n\ge 2$ identical, risk neutral agents indexed by $i=1,\ldots,n$. Each agent $i$ chooses effort $x_i\ge 0$ at a cost $c(x_i)$, where $c(\cdot)$ is strictly increasing and strictly convex, with $c(0)=c'(0)=0$. The output of agent $i$ is $Y_i=x_i+\varepsilon_i$, where shocks $\varepsilon_i$ are i.i.d. across agents, drawn from a distribution $F(\cdot)$ with density $f(\cdot)$.

The agents are rewarded with prizes ${\bf v}=(v_1,\ldots,v_n)$ based on the ranking of their outputs; that is, the agent with the highest output receives $v_1$, with the second highest---$v_2$, etc. Feasible prize schedules lie in set $\mathcal{V}=\{{\bf v}\in\mathds{R}_+^n:v_1\ge\ldots\ge v_n=0,\sum_{r=1}^nv_r=1\}$ of monotone, non-negative prize schedules with total budget normalized to one. We assume limited liability with zero outside option, in which case it is always optimal to set the last prize $v_n$ to zero.

\paragraph{The equilibrium} 
For a given vector of efforts ${\bf x}=(x_1,\ldots,x_n)$, the expected payoff of agent $i$ is
\[
\pi^{(i)}({\bf x}) = \sum_{r=1}^np^{(i,r)}({\bf x})v_r - c(x_i),
\]
where $p^{(i,r)}({\bf x})$ is agent $i$'s probability of being ranked $r$. We focus on a symmetric, pure strategy equilibrium, in which all agents exert some effort $x^*>0$. Assuming all agents except one follow this strategy, the payoff of the deviating agent with effort $x$ can be written as 
\[
\pi(x,x^*) = \sum_{r=1}^np^{(r)}(x,x^*)v_r - c(x),
\]
where 
\[
p^{(r)}(x,x^*) = \binom{n-1}{r-1}\int F(x-x^*+t)^{n-r}[1-F(x-x^*+t)]^{r-1}\dd F(t)
\]
is the probability for the deviating agent to be ranked $r$.

Adopting the first order approach,\footnote{\label{fn_existence1}The first order approach yields a unique symmetric equilibrium in the tournament game under the assumptions that (i) $\inf_{x\ge 0}c''(x)>0$ (this is satisfied, for example, for quadratic costs); and (ii) the distribution of noise is sufficiently dispersed, which bounds the second derivative of the benefit of effort and, in conjunction with (i), ensures the global concavity of payoffs. Since we are solving a design problem, it is important for the dispersion condition to be satisfied at the optimum. This can always be guaranteed by choosing a large enough entropy bound $\bar{H}$, as discussed below.} we seek the symmetric equilibrium solving the symmetrized first order condition $\pi_x(x^*,x^*)=0$. This gives
\begin{align}
\label{foc_symm}
\sum_{r=1}\beta_rv_r = c'(x^*),
\end{align}
where coefficients $\beta_r=p^{(r)}_x(x^*,x^*)$---the marginal probabilities of being ranked $r$ in equilibrium---are
\[
\beta_r = \binom{n-1}{r-1}\int F(t)^{n-r-1}[1-F(t)]^{r-2}[n-r-(n-1)F(t)]f(t)\dd F(t).
\]
For what follows, it is convenient to introduce cumulative coefficients $B_r=\sum_{k=1}^r\beta_k$, of the form
\[
B_r = r\binom{n-1}{r}\int F(t)^{n-r-1}[1-F(t)]^{r-1}f(t)\dd F(t).
\]
Note that $B_r>0$ for all $r=1,\ldots,n-1$ and $B_n=0$. Using summation by parts, the first-order condition (\ref{foc_symm}) can then be written as 
\begin{align}
\label{foc_byparts}
\sum_{r=1}^{n-1}B_r(v_r-v_{r+1}) = c'(x^*).
\end{align}
To understand this representation, recall that coefficient $\beta_r$ is the marginal probability of being ranked $r$; therefore, $B_r=\sum_{k\le r}\beta_k$ is the marginal probability of being ranked $r$ \emph{or higher}. In order to be ranked $r$ or higher, one needs to surpass the agent currently ranked $r$, and the reward for this is the prize differential $v_r-v_{r+1}$.

\paragraph{Quantile representation} Let $F^{-1}(z)=\inf\{t:F(t)\ge z\}$ denote the quantile function of noise, and let $m(z)=f(F^{-1}(z))$ denote the \emph{inverse quantile density} \citep{Parzen:1979}. Coefficients $B_r$ can then be rewritten as
\begin{align}
\label{B_r_m}
B_r = r\binom{n-1}{r}\int_0^1 z^{n-r-1}(1-z)^{r-1}m(z)\dd z.
\end{align}
Since $c(\cdot)$ is strictly convex, (\ref{foc_byparts}) and (\ref{B_r_m}) imply that the impact of the distribution of noise on the equilibrium effort is captured entirely by function $m(\cdot)$. Moreover, the left-hand side of (\ref{foc_byparts}) is a \emph{linear functional} of $m(\cdot)$, which we further rewrite as
\begin{align}
\label{MR_through_a}
\sum_{r=1}^{n-1}B_r(v_r-v_{r+1}) = \int_0^1 a(z;{\bf d})m(z)\dd z,
\end{align}
where 
\begin{align}
\label{a_def}
a(z;{\bf d}) = \sum_{r=1}^{n-1}r\binom{n-1}{r}z^{n-r-1}(1-z)^{r-1}d_r.
\end{align}
Vector ${\bf d}\in\mathds{R}_+^{n-1}$ has non-negative components $d_r=v_r-v_{r+1}$ satisfying the budget constraint $\sum_{r=1}^{n-1}rd_r=1$. We let $\mathcal{D}$ denote the set of such vectors. 

\paragraph{Connection between $m(\cdot)$ and $F(\cdot)$} By definition, $m(z)=f(F^{-1}(z))$, which implies $m(F(t))=f(t)$, i.e., $F(\cdot)$ satisfies the first-order ODE $F'(t) = m(F(t))$, for any $t\in{\rm supp}(F)$. Note, however, that $F^{-1}(z)$ only returns values from ${\rm supp}(F)$; therefore, it is impossible to uniquely restore $F(\cdot)$ from $m(\cdot)$ without knowing ${\rm supp}(F)$. This is not a problem because, as seen from (\ref{foc_byparts}) and (\ref{B_r_m}), all distributions giving rise to the same function $m(\cdot)$ produce the same equilibrium effort. Shifting the entire support or even splitting it into disjoint intervals and shifting those (without changing the order of the intervals) does not affect $m(\cdot)$. For concreteness, and without loss, when restoring $F(\cdot)$ from $m(\cdot)$, we will restrict attention to distributions with an interval support $[\underline{\varepsilon},\overline{\varepsilon}]$ (where the bounds may potentially be infinite). Then, fixing arbitrary $\varepsilon_0\in\mathds{R}$ and $F(\varepsilon_0)\in(0,1)$, the solution to $F'(t) = m(F(t))$ is given implicitly by
\begin{align}
\label{m-to-F}
t - \varepsilon_0 = \int_{F(\varepsilon_0)}^{F(t)} \frac{\dd z}{m(z)}.
\end{align}
For example, if ${\rm supp}(F)$ is bounded below, we can set $\varepsilon_0=\underline{\varepsilon}$ and $F(\varepsilon_0)=0$. As explained above, this arbitrariness is due to the fact that, for a given $m(\cdot)$, $F(\cdot)$ is defined up to a shift of support.\footnote{\label{fn_bounded_support}Notice also that
\[
\int_0^1\frac{\dd z}{m(z)} = \int\frac{\dd F(t)}{f(t)} = \overline{\varepsilon} - \underline{\varepsilon},
\]
which implies that $\frac{1}{m(z)}$ is integrable if and only if the support of noise is bounded.}

\section{Robust tournament design}
\label{sec:robust_design}

We define \emph{robust tournament} as one that maximizes the lowest equilibrium effort that can be reached when the distribution of noise is unknown, except that its \emph{Shannon entropy}, $H[f]=-\int f(t)\log f(t)\dd t$, is bounded above by some $\bar{H}\in\mathds{R}$.\footnote{\label{fn_existence2}It is assumed implicitly throughout this discussion that the symmetric equilibrium exists. As we observe below, the entropy constraint will always bind at the optimum, and hence the equilibrium existence can be assured by choosing a sufficiently large $\bar{H}$. Intuitively, robust design involves a distribution of noise that is the most dispersed, for a given $\bar{H}$, and higher dispersion facilitates (the symmetric, pure strategy) equilibrium existence in the tournament game (see footnote \ref{fn_existence1}).} Notice that we can write the entropy using the inverse quantile density, in the form $H[f]=-\int_0^1\log m(z)\dd z$. Let $\mathcal{M}\subseteq L^1([0,1])$ denote the set of functions $m:[0,1]\to\mathds{R}_+$ with $-\int_0^1\log m(z)\dd z\le \bar{H}$. 

That is, objectively, there is an exogenous distribution of noise $F$, with $H[f]\le \bar{H}$, which is common knowledge for the agents. The agents play according to the symmetric equilibrium (described in the previous section) corresponding to this $F$ and the allocation of prizes selected by the principal. To make this dependence explicit, let $x^*[F,{\bf v}]$ denote the equilibrium effort. The principal is unaware of $F$ and chooses ${\bf v}$ so as to maximize the lower bound of the agents' effort for all distributions $F$ satisfying the entropy constraint. Formally, the principal solves the problem $\max_{{\bf v}\in\mathcal{V}}\min_{m\in\mathcal{M}}x^*[F,{\bf v}]$, which, using (\ref{foc_byparts}) and (\ref{MR_through_a}), is equivalent to
\begin{align}
\label{problem_maxmin}
\max_{{\bf d}\in\mathcal{D}}\min_{m\in\mathcal{M}}\int_0^1 a(z;{\bf d})m(z)\dd z.
\end{align}
Problem (\ref{problem_maxmin}) can be interpreted as representing a zero-sum game between the principal choosing ${\bf d}$ and an adversary choosing $m$. For such games, under certain conditions, a version of the minimax theorem implies that the value of the problem is independent of the order of the max and min operators. In our case, note that (i) the objective is a continuous linear functional in $m$ for a fixed ${\bf d}\in\mathcal{D}$, and a continuous function of ${\bf d}$ for a fixed $m\in\mathcal{M}$; (ii) set $\mathcal{D}$ is compact and convex, and set $\mathcal{M}$ is convex and bounded in the $L^1$ norm. Therefore, the conditions of Sion's Minimax Theorem \citep[see, e.g.,][Ch. 6]{Ekeland-Temam:1999} are satisfied, and the result holds.

\paragraph{The minimization problem} For a given ${\bf d}\in\mathcal{D}$, the adversary selects a distribution of noise that minimizes the equilibrium effort, by solving
\begin{align}
\label{problem_min}
\min \int_0^1 a(z;{\bf d})m(z)\dd z \quad \text{s.t. } -\int_0^1\log m(z)\dd z\le \bar{H}.
\end{align}
This is a well-behaved continuous programming problem for which the Lagrangian approach produces the unique\footnote{Here and below, uniqueness of solutions to variational problems is understood in the $L^1$-a.e. sense.} solution (see Appendix \ref{app_proofs} for details)
\begin{align}
\label{m*_d}
m^*(z;{\bf d}) = \exp\left[-\bar{H}+\int_0^1\log a(z';{\bf d})\dd z'\right]\frac{1}{a(z;{\bf d})}.
\end{align}
The constant multiplying $\frac{1}{a(z;{\bf d})}$ in (\ref{m*_d}) is the Lagrange multiplier associated with the constraint in (\ref{problem_min}), which is binding. Notice that the maximal entropy $\bar{H}$ enters as a factor and does not affect the \emph{shape} of the optimal distribution. It does, however, suppress the overall density of noise.\footnote{Formally, optimal noise distributions corresponding to different levels of $\bar{H}$ are ranked in the \emph{dispersive order}, which unambiguously ranks the equilibrium effort across distributions \citep{Drugov-Ryvkin:2020_noise}. It is also clear, as mentioned in footnote \ref{fn_existence2}, that the equilibrium existence can be guaranteed by choosing a large enough $\bar{H}$.} 

For intuition, consider the structure of function $a(z;{\bf d})$, Eq. (\ref{a_def}), which we can rewrite as
\begin{align}
\label{a_os}
a(z;{\bf d}) = \sum_{r=1}^{n-1}f^{\rm B}(z;n-r,r)d_r,
\end{align}
where $f^{\rm B}(z;n-r,r)$ is the pdf of the beta distribution with parameters $(n-r,r)$. This pdf is single-peaked at $z_r=\frac{n-1-r}{n-2}$, which is, approximately, the expectation of the $r$-th largest draw from the uniform distribution on $[0,1]$.\footnote{This is not a coincidence because $f^{\rm B}(z;n-r,r)$ is the pdf of order statistic $(n-r:n-1)$ from the uniform distribution on $[0,1]$. We will use this connection in the next section to derive the asymptotic robust prize schedule for large $n$.} Thus, $a(z;{\bf d})$ is a linear combination of such peaks with weights $d_r$. These peaks correspond to different performance (or noise) quantiles, and prize differentials $d_r$ determine the corresponding marginal incentives. That is, if $d_r>0$ for some $r$, the agents are incentivized to exert effort conditional on their noise realization being around $F^{-1}(z_r)$; on the other hand, if $d_r=0$, realizations around $F^{-1}(z_r)$ are not incentivized. Vector ${\bf d}$, therefore, determines which parts of the distribution of noise will contribute to the equilibrium effort. Hence, for a fixed $F$, the effort-maximizing prize allocation is one where $d_r$ is positive at the ``right'' place in order to activate the corresponding part of the distribution.\footnote{It might seem that such a place is where the pdf of noise is maximized. Note, however, that the budget constraint is $\sum_{r=1}^{n-1}rd_r=1$, i.e., the ``costs'' of different prize differentials $d_r$ are different. In particular, it is cheaper to create prize differentials at the top because those do not need to be carried over to lower ranks. The effort-maximizing ${\bf d}$, therefore, strikes a balance between these costs (increasing in $r$) and the benefits of activating the regions of $F$ where the density is the highest. The resulting optimal ${\bf d}$ has $d_r>0$ for $r$ such that the \emph{hazard rate} of noise is maximized near $F^{-1}(z_r)$ \citep{Drugov-Ryvkin:2020_prizes}.} However, in problem (\ref{problem_min}) the adversary chooses the distribution of noise so as to \emph{minimize} the equilibrium effort, for a given ${\bf d}$. The resulting optimal $m(\cdot)$, therefore, assigns \emph{lower} density to regions where $a(z;{\bf d})$ is larger, trying to avoid noise realizations that are incentivized by positive prize differentials.

Let $F^*(t;{\bf d})$ denote the optimal (adversarial) distribution of noise corresponding to $m^*(z;{\bf d})$. We observe the following.
\begin{lemma}
\label{lemma_bounded_support}
For any ${\bf d}\in\mathcal{D}$, $F^*(t;{\bf d})$ has a bounded support.    
\end{lemma}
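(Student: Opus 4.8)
The plan is to reduce the claim to an integrability statement and then exploit the polynomial structure of $a(\cdot;{\bf d})$. By footnote~\ref{fn_bounded_support}, the support of $F^*(\cdot;{\bf d})$ is bounded if and only if $1/m^*(\cdot;{\bf d})$ is integrable on $[0,1]$. From the closed form (\ref{m*_d}) I would first write
\[
\frac{1}{m^*(z;{\bf d})} = \kappa({\bf d})\,a(z;{\bf d}), \qquad \kappa({\bf d}) = \exp\left[\bar{H}-\int_0^1\log a(z';{\bf d})\dd z'\right],
\]
so that $\int_0^1 \dd z/m^*(z;{\bf d}) = \kappa({\bf d})\int_0^1 a(z;{\bf d})\dd z$. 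It then suffices to show that both factors are finite: that $\int_0^1 a(z;{\bf d})\dd z<\infty$ and that $\kappa({\bf d})<\infty$, the latter being equivalent to $\int_0^1\log a(z;{\bf d})\dd z>-\infty$.

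The first factor is immediate from the representation (\ref{a_os}): since each $f^{\rm B}(z;n-r,r)$ integrates to one over $[0,1]$, the budget constraint defining $\mathcal{D}$ gives
\[
\int_0^1 a(z;{\bf d})\dd z = \sum_{r=1}^{n-1}d_r \le \sum_{r=1}^{n-1}r\,d_r = 1.
\]

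The finiteness of $\kappa({\bf d})$ is the crux, as this is the only place where the integral could diverge, namely to $-\infty$ through $a$ vanishing at a boundary. Here I would use that, by (\ref{a_def}), $a(\cdot;{\bf d})$ is a polynomial in $z$, and, by (\ref{a_os}), a non-negative linear combination of the beta densities $f^{\rm B}(\cdot;n-r,r)$ with at least one strictly positive weight (since $\sum_r r\,d_r=1$ forces ${\bf d}\neq 0$). As each beta density is strictly positive on $(0,1)$, so is $a(\cdot;{\bf d})$; its only possible zeros therefore lie at the endpoints. Being a polynomial, $a$ either stays bounded away from zero at a given endpoint or vanishes there to some nonnegative integer order $p$, in which case $a(z;{\bf d})$ is asymptotic to a positive constant times $z^{p}$ near $0$ (resp. $(1-z)^{p}$ near $1$). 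Hence $\log a(z;{\bf d})$ is continuous on $(0,1)$ with at most a logarithmic singularity $p\log z$ (resp. $p\log(1-z)$) at each endpoint, and since $\int_0^\delta\log z\,\dd z$ and $\int_{1-\delta}^1\log(1-z)\,\dd z$ both converge, $\int_0^1\log a(z;{\bf d})\dd z$ is finite. Combining the two bounds yields $\int_0^1\dd z/m^*(z;{\bf d})<\infty$, proving the claim.

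I expect the endpoint analysis to be the main obstacle: one must rule out $a$ vanishing faster than any power (which would send $\int\log a$ to $-\infty$). The key simplification is that $a(\cdot;{\bf d})$ is genuinely a polynomial, so its boundary zeros have integer order and generate only integrable logarithmic singularities in $\log a$.
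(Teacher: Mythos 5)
Your proof is correct and follows essentially the same route as the paper's: reduce boundedness of the support to integrability of $1/m^*(\cdot;{\bf d})$ via footnote \ref{fn_bounded_support}, then use (\ref{m*_d}) to see that this requires only $\int_0^1 a(z;{\bf d})\dd z<\infty$ and $\int_0^1\log a(z;{\bf d})\dd z>-\infty$. The only difference is that you spell out the endpoint analysis for the integrability of $\log a$ (finite-order polynomial zeros give integrable logarithmic singularities), which the paper asserts without detail.
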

Indeed, since $a(z;{\bf d})$ is a linear combination of terms like $z^{n-1-r}(1-z)^{r-1}$, it is integrable; moreover, $\log a(z;{\bf d})$ too is integrable for any ${\bf d}\in\mathcal{D}$. Thus, from (\ref{m*_d}), $\frac{1}{m^*(z;{\bf d})}$ is integrable, which implies the result (see footnote \ref{fn_bounded_support}).

\paragraph{The maximization problem} Plugging $m^*(z;{\bf d})$ back into problem (\ref{problem_maxmin}), we obtain the maximization problem for the robust allocation of prizes:
\begin{align}
\label{problem_max}
\max_{\bf d}\int_0^1\log a(z;{\bf d})\dd z \quad \text{s.t. } \sum_{r=1}^{n-1}rd_r=1, \quad d_1,\ldots,d_{n-1}\ge 0.
\end{align}
This nonlinear programming problem has a strictly concave objective and a linear constraint; therefore, the solution is unique and Kuhn-Tucker (KT) conditions are necessary and sufficient for optimality. They take the form
\begin{align}
\label{KT_conditions}
\ell_r({\bf d})=\int_0^1\frac{r\binom{n-1}{r}z^{n-r-1}(1-z)^{r-1}}{\sum_{s=1}^{n-1}s\binom{n-1}{s}z^{n-s-1}(1-z)^{s-1}d_s}\dd z \le r\mu, \quad \text{with equality if } d_r>0,
\end{align}
for some Lagrange multiplier $\mu>0$, which can be fully determined.

\begin{lemma}
\label{lemma:mu}
The optimal Lagrange multiplier in problem (\ref{problem_max}) is $\mu^*=1$.
\end{lemma}
\begin{proof}
Consider a version of problem (\ref{problem_max}) where the budget constraint is parameterized as $\sum_{r=1}^{n-1}rd_r=b$, with $b>0$. We can define new variables $\tilde{d}_r=\frac{d_r}{b}$, which gives $a(z;{\bf d}) = ba(z;\tilde{\bf d})$, so the optimal value function of the parameterized problem is
\[
V(b) = \log b + \int_0^1 \log a(z;\tilde{\bf d}^*)\dd z.
\]
Note that $\tilde{\bf d}^*$ solves (\ref{problem_max}), and hence the second term is independent of $b$. The Envelope Theorem then gives the optimal Lagrange multiplier in the parameterized problem, $\mu^*(b)=V'(b)=\frac{1}{b}$, and hence $\mu^*(1)=1$.
\end{proof}

Note, from (\ref{m*_d}), that the objective in (\ref{problem_max}) is the logarithm of the marginal benefit of effort (modulo the entropy bound $\bar{H}$). Thus, similar to tournaments with a fixed distribution of noise, Lemma \ref{lemma:mu} implies that an increase in the prize budget would raise this marginal benefit proportionally.  

\section{Main results}
\label{sec:main_results}

Our first result is as follows.

\begin{proposition}
\label{prop_main_robust}
The robust prize schedule awards positive prizes for all ranks $r=1,\ldots,n-1$, with a distinct prize at the top.
\end{proposition}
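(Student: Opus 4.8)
The plan is to work directly from the Kuhn--Tucker conditions (\ref{KT_conditions}), specialized to $\mu^*=1$ via Lemma \ref{lemma:mu}, and to extract two contradictions from the endpoint behavior of the integrand. A useful first step is to recognize, by comparing (\ref{a_def}) and (\ref{a_os}), that the numerator appearing in $\ell_r({\bf d})$ is exactly the beta density $f^{\rm B}(z;n-r,r)$, so the conditions read
\[
\ell_r({\bf d})=\int_0^1\frac{f^{\rm B}(z;n-r,r)}{a(z;{\bf d})}\,\dd z\le r,\quad\text{with equality if }d_r>0.
\]
The key observation is that, at an optimum, every $\ell_r({\bf d})$ must be \emph{finite}. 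The whole strategy reduces to showing that if either $d_1=0$ or $d_{n-1}=0$, then the corresponding integral ($\ell_1$ or $\ell_{n-1}$) diverges, violating its bound.

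For the distinct top prize I would suppose $d_1=0$ and examine $z\to 1$. Among the densities $\{f^{\rm B}(z;n-r,r)\}_{r=1}^{n-1}$, only the $r=1$ term fails to vanish at $z=1$ (it tends to $n-1$), since every term carries a factor $(1-z)^{r-1}$. With $d_1=0$, the denominator $a(z;{\bf d})$ is assembled from terms with $r\ge 2$, and the budget constraint $\sum_r rd_r=1$ forces some $d_{r_0}>0$ with $r_0\ge 2$; hence $a(z;{\bf d})\sim C(1-z)^{r_0-1}$ as $z\to 1$ for some $C>0$. The integrand of $\ell_1$ then behaves like $(n-1)/[C(1-z)^{r_0-1}]$ near $z=1$, and since $r_0-1\ge 1$ this is non-integrable (at least a logarithmic blow-up). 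This contradicts $\ell_1\le 1$, forcing $d_1>0$, i.e.\ $v_1>v_2$.

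For positivity of all prizes I would run the mirror-image argument at $z=0$. Since $v_r=\sum_{k=r}^{n-1}d_k$ with $v_n=0$, it suffices to establish $d_{n-1}=v_{n-1}>0$, because monotonicity then yields $v_r\ge v_{n-1}>0$ for every $r\le n-1$. Supposing $d_{n-1}=0$, only the $r=n-1$ density survives at $z=0$ (each other term carries $z^{n-r-1}$ with $n-r-1\ge 1$), while the budget constraint again supplies some $d_{r_1}>0$ with $r_1\le n-2$, so $a(z;{\bf d})\sim C'z^{\,n-r_1-1}$ as $z\to 0$ with $n-r_1-1\ge 1$. Then $\ell_{n-1}$ diverges, contradicting $\ell_{n-1}\le n-1$, and hence $d_{n-1}>0$.

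The main obstacle is making the two local-divergence estimates rigorous: one must pin down precisely which term dominates $a(z;{\bf d})$ at each endpoint, confirm that the surviving exponent of $(1-z)$ (resp.\ $z$) is at least one so the quotient is non-integrable, and check that the remaining, vanishing terms cannot cancel or raise the leading order. Everything else---identifying the numerators as beta densities and using the budget constraint to locate a positive differential away from the tested rank---is routine once the boundary asymptotics are secured.
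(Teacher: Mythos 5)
Your proof is correct and follows essentially the same route as the paper's: both derive a contradiction from the divergence of the Kuhn--Tucker integrals, namely $\ell_1$ at $z=1$ when $d_1=0$ and $\ell_{n-1}$ at $z=0$ when $d_{n-1}=0$. If anything, your endpoint asymptotics are stated more carefully than the paper's (which attributes the $z\to 0$ behavior of $a(z;{\bf d})$ to $\underline{r}=\min\{r:d_r>0\}$ rather than to the maximal index with a positive differential); the only cosmetic caveat is that your ``$\sim$'' claims are exact only when $r_0$ and $r_1$ are taken to be the extremal positive indices, though the one-sided bounds $a(z;{\bf d})=O(1-z)$ near $z=1$ and $a(z;{\bf d})=O(z)$ near $z=0$, which hold term by term since all summands are non-negative, already suffice for the divergence you need.
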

We prove Proposition \ref{prop_main_robust} (in Appendix \ref{app_proofs}) by verifying that $d_1>0$ and $d_{n-1}>0$ must hold for the system of KT conditions (\ref{KT_conditions}) to be compatible. We already discussed in the previous section why having more positive prize differentials is generally helpful for robustness. This is especially so for the prize differentials at the top and at the bottom. To understand why, recall that $a(z;{\bf d})$ determines which regions of noise realizations the adversary will target to minimize effort. Particularly vulnerable are the regions where $a(z;{\bf d})$ is low. Note that, for any ${\bf d}\in\mathcal{D}$, $a(z;{\bf d})>0$ for all $z\in(0,1)$, with $a(0,{\bf d})=(n-1)d_{n-1}$ and $a(1,{\bf d})=(n-1)d_1$. Therefore, setting positive $d_1$ and $d_{n-1}$ is the only way for the principal to ensure the positivity of $a(z;{\bf d})$ everywhere.

\subsection{Small tournaments}
\label{sec:small_n}

It is instructive to consider tournaments with $n=3$ and 4 agents. The following result is immediate from Proposition \ref{prop_main_robust}.

\begin{corollary}
\label{cor_n=3}
For $n=3$, the robust tournament awards distinct prizes $v_1^*>v_2^*>v_3^*=0$.
\end{corollary}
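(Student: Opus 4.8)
The plan is to derive the corollary directly as a specialization of Proposition~\ref{prop_main_robust} to the case $n=3$. Since the proposition guarantees positive prizes for all ranks $r=1,\ldots,n-1$ together with a distinct top prize, I would simply translate these two conclusions into the language of prize levels $v_1^*,v_2^*,v_3^*$ for the smallest nontrivial case.

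First I would recall the relationship between prize differentials and prize levels. By definition $d_r = v_r - v_{r+1}$, so positivity of a prize differential $d_r$ is equivalent to the strict inequality $v_r > v_{r+1}$. For $n=3$ there are exactly two differentials, $d_1 = v_1^* - v_2^*$ and $d_2 = v_2^* - v_3^*$. Proposition~\ref{prop_main_robust} asserts that $d_r > 0$ for all $r = 1,\ldots,n-1 = 1,2$, which immediately yields $v_1^* > v_2^*$ (from $d_1>0$) and $v_2^* > v_3^*$ (from $d_2 > 0$). The ``distinct prize at the top'' clause is here subsumed by $d_1 > 0$ for $n=3$, so no separate argument is needed.

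Second, I would invoke the feasibility constraints defining $\mathcal{V}$, which fix $v_3^* = v_n^* = 0$ (the last prize is always zero under limited liability with a zero outside option, as noted in the model section). Combining this with the strict inequalities from the previous step gives the full chain $v_1^* > v_2^* > v_3^* = 0$, which is exactly the claim. The budget constraint $\sum_r v_r^* = 1$ plays no role in establishing the ordering, though it pins down the actual magnitudes.

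Because every step is an immediate substitution, I do not anticipate any genuine obstacle; the only ``work'' is the bookkeeping of converting the differential-based statement of Proposition~\ref{prop_main_robust} into level-based inequalities, and recalling that $v_n^* = 0$ by construction. The substantive content lives entirely in Proposition~\ref{prop_main_robust} itself—in particular in the verification via the KT conditions~(\ref{KT_conditions}) that both $d_1 > 0$ and $d_{n-1} > 0$ are forced—so for $n=3$ the corollary truly is, as the text says, immediate.
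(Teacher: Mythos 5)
Your argument is correct and matches the paper, which likewise presents the corollary as an immediate consequence of Proposition \ref{prop_main_robust}: the proof of that proposition forces $d_1>0$ and $d_{n-1}>0$, and for $n=3$ the indices $1$ and $n-1=2$ exhaust all prize differentials, giving $v_1^*>v_2^*>v_3^*=0$. One small caution on your phrasing: Proposition \ref{prop_main_robust} does \emph{not} assert $d_r>0$ for all $r$ in general (Lemma \ref{lemma_n=4} shows $d_2=0$ for $n=4$); it is only the coincidence that $\{1,n-1\}=\{1,2\}$ when $n=3$ that makes your reading harmless here.
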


The maximization problem (\ref{problem_max}) for $n=3$ can (almost) be solved in a closed form (see Appendix \ref{app_n=3}), producing ${\bf v}^* \approx (0.9055,0.0945,0)$. \cite{Krishna-Morgan:1998} considered tournaments of $n\le 4$ agents (with exogenous noise) and showed that WTA is always optimal for $n=3$ when the density of noise is unimodal and symmetric, even if the agents are risk averse. For $n\ge 4$, however, this no longer holds.\footnote{When agents are risk neutral, WTA is optimal for $n=4$ as well, but (generally) not for $n\ge 5$. Moreover, the unimodality of the density is not important for these results \citep{Drugov-Ryvkin:2020_prizes}.} They refer to this result as ``the winner-take-all principle for small tournaments.'' Our results for $n=3$ show that the robust prize schedule is quite close to WTA. Thus, even though the ``principle'' is not robust in a strict sense, it would perform relatively well in the small tournament even in settings where the principal does not know the distribution of noise. 

Importantly, for $n=3$ the robust prize schedule has positive prize differentials \emph{everywhere}, i.e., it is fundamentally different from optimal prize schedules identified in the same setting for a fixed distribution of noise. As explained above, the latter (generically) have only \emph{one} positive prize differential corresponding to the rank $r$ such that the hazard rate of noise is maximized near $F^{-1}(z_r)$.\footnote{One exception is the exponential distribution, which has a constant hazard rate. For this distribution, \emph{any} feasible prize allocation produces the same level of effort. We return to this point below where we discuss robust tournaments for large $n$.} The reason, as mentioned earlier, is that it would be easy for the adversary to counteract such a prize schedule, by reducing the density of noise in the corresponding region. Therefore, generally, prize differentials improve robustness, and it might seem that a robust prize schedule should have positive prize differentials for \emph{all} ranks. However, as the following result shows, this is not the case.
\begin{lemma}
\label{lemma_n=4}
For $n=4$, the robust tournament awards two distinct (positive) prizes, with $v_1^*>v_2^*=v_3^*>v_4^*=0$.
\end{lemma}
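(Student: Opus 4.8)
The plan is to characterize the solution of problem (\ref{problem_max}) for $n=4$ directly by analyzing the KT conditions (\ref{KT_conditions}). By Proposition \ref{prop_main_robust}, we already know $d_1>0$ and $d_3>0$, so the only open question is the sign of $d_2$. With $n=4$ there are three prize differentials $d_1,d_2,d_3$, and the claim $v_1^*>v_2^*=v_3^*>v_4^*=0$ is equivalent to $d_1>0$, $d_2=0$, and $d_3>0$. Thus I would first reduce the lemma to showing that the middle prize differential must vanish, i.e.\ that $d_2=0$ at the optimum.

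The key step is to evaluate the KT condition for $r=2$ under the hypothesis that all three differentials are strictly positive and derive a contradiction, or else to exhibit a feasible solution with $d_2=0$ satisfying all KT conditions and invoke uniqueness (the objective is strictly concave, so the KT conditions have a unique solution by the discussion preceding Lemma \ref{lemma:mu}). The cleaner route is the latter: I would posit the form $d_2=0$ and reduce the problem to two unknowns $d_1,d_3$ with the budget constraint $d_1+3d_3=1$ (since $\sum_{r=1}^{3}rd_r=1$ becomes $d_1+3d_3=1$). With $n=4$, using $\mu^*=1$ from Lemma \ref{lemma:mu}, the denominator in (\ref{KT_conditions}) becomes the explicit polynomial $a(z;{\bf d})=3z^2 d_1 + 6z(1-z)d_2 + 3(1-z)^2 d_3$, which at $d_2=0$ simplifies to $3[z^2 d_1 + (1-z)^2 d_3]$. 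I would then verify the two active KT equalities $\ell_1({\bf d})=\mu^*=1$ and $\ell_3({\bf d})=3\mu^*=3$, and check the remaining inequality $\ell_2({\bf d})\le 2\mu^*=2$ for $r=2$; the inequality being strict is exactly what forces $d_2=0$ rather than $d_2>0$.

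Concretely, with $d_2=0$ the integrals $\ell_1,\ell_2,\ell_3$ reduce to rational functions integrated against $z^2 d_1+(1-z)^2 d_3$ in the denominator, which are evaluable in closed form (partial fractions in $z$, producing logarithms and arctangents). The active conditions $\ell_1=1$ and $\ell_3=3$ together with $d_1+3d_3=1$ would pin down the ratio $d_1/d_3$ and hence the prizes, and I expect this to reproduce a value consistent with the numerically reported ratio $8{:}1$ between the first and second prize mentioned in the introduction. The main obstacle is the slack condition for $r=2$: one must show $\ell_2({\bf d}^*)<2$ strictly at the computed optimum, since this is the nonobvious inequality (it asserts that activating the middle differential would not help). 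I would establish it either by direct evaluation of the closed-form integral at the computed $(d_1,d_3)$, or, more robustly, by a symmetry/comparison argument exploiting the fact that $f^{\rm B}(z;2,2)$ is symmetric about $z=1/2$ while the optimal weighting concentrates mass near the endpoints $z=0,1$ where $a(z;{\bf d})$ is smallest. The delicate point is ensuring the argument is tight enough to give a strict inequality rather than mere consistency, so that uniqueness of the KT solution then yields the claimed schedule.
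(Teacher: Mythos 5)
Your proposal follows essentially the same route as the paper's proof: posit the form ${\bf d}=(d_1,0,d_3)$, solve the two active KT equalities together with the budget constraint $d_1+3d_3=1$ to pin down $d_1/d_3$ (the paper finds $\kappa^*=d_1^*/d_3^*\approx 6.896$, giving $d_1^*=0.6968$, $d_3^*=0.1011$, $\mu^*=1$), then verify the slack condition $\ell_2\approx 1.919<2$ by direct evaluation, with uniqueness from strict concavity closing the argument. You correctly identified the one nonobvious step (the strict inequality for $r=2$), which the paper also settles by direct numerical evaluation rather than by a structural symmetry argument.
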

Thus, Proposition \ref{prop_main_robust} cannot be strengthened, at least for $n$ fixed.\footnote{It can, however, be strengthened in a certain asymptotic sense, see Section \ref{sec:large_n}.} The intuition for why $v_2=v_3$ is optimal is the following. First, for any finite $n$, and especially when $n$ is small, the densities of uniform order statistics overlap. Therefore, the adversary cannot completely shut down effort by simply increasing the density near $z_r$s corresponding to zero prize differentials and keeping the density low near $z_r$s where prize differentials are positive. Besides, the adversary has to respect the entropy constraint, which makes intervals of low density very costly.

Second, by Lemma \ref{lemma_bounded_support} the adversarial distribution $F^*(t;{\bf d}^*)$ has a bounded support. Hence, it has an increasing hazard rate at least near the upper bound of the support. This calls for a significant first prize \citep[note that the winner-take-all scheme is optimal when the noise has an increasing failure rate, see][]{Drugov-Ryvkin:2020_prizes}. The adversary then tries to minimize the density around the top order statistics and in doing so, it also makes the density low near the adjacent ranks. Recall that the principal in any case has to choose a large, positive $d_1$ to avoid the logarithmic penalty at $z=1$. It is then optimal to forgo incentives at rank 2 by setting $d_2=0$ (unless $n=3$, in which case $d_2>0$ is needed to avoid the penalty at $z=0$).

\begin{table}[t]
\centerline
{\footnotesize
\begin{tabular}{ccccccccccc}
\toprule
 $n$ &    $v_1^*$ &    $v_2^*$ &    $v_3^*$ &    $v_4^*$ &    $v_5^*$ &    $v_6^*$ &    $v_7^*$ &    $v_8^*$ &    $v_9^*$ &   $v_{10}^*$\\
\midrule
 3 & 0.9055 & 0.0945 &      0 &        &        &        &        &        &        &  \\
 4 & 0.7979 & 0.1011 & 0.1011 &      0 &        &        &        &        &        &  \\
 5 & 0.6934 & 0.1274 & 0.1274 & 0.0518 &      0 &        &        &        &        &  \\
 6 & 0.6204 & 0.1304 & 0.1304 & 0.0893 & 0.0296 &      0 &        &        &        &  \\
 7 & 0.5644 & 0.1248 & 0.1248 & 0.1248 & 0.0340 & 0.0272 &      0 &        &        &  \\
 8 & 0.5107 & 0.1280 & 0.1280 & 0.1280 & 0.0440 & 0.0440 & 0.0173 &      0 &        &  \\
 9 & 0.4680 & 0.1276 & 0.1276 & 0.1276 & 0.0536 & 0.0521 & 0.0298 & 0.0138 &      0 &  \\
10 & 0.4341 & 0.1244 & 0.1244 & 0.1244 & 0.0690 & 0.0456 & 0.0456 & 0.0211 & 0.0114 &      0       \\
\bottomrule
\end{tabular}
}
\caption{Robust optimal prize schedules for $n=3,\ldots,10$ obtained by numerically solving problem (\ref{problem_max}) using Python (code available upon request).}
\label{tab:prizes}
\end{table}

\begin{figure}
\centering
\includegraphics[width=4in]{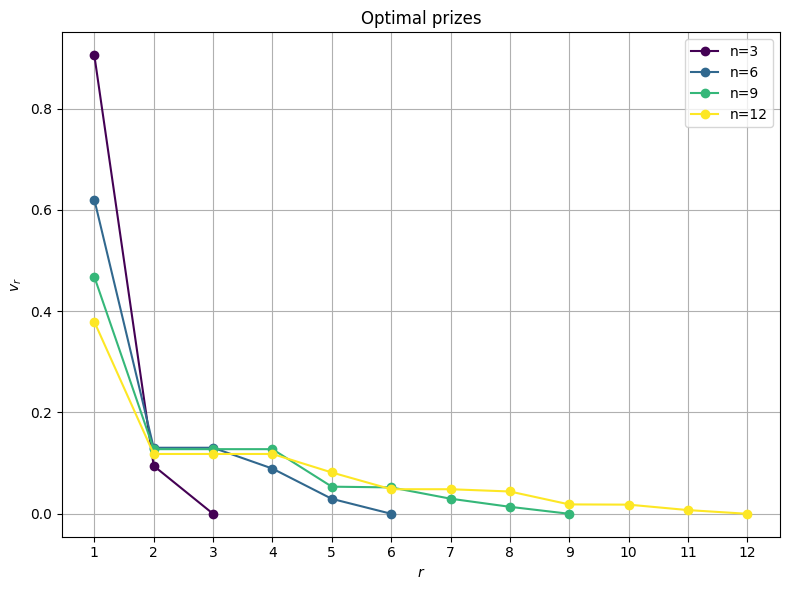}
\caption{Optimal prizes for $n=3,6,9,12$.}
\label{fig:prizes_small_n}
\end{figure}

Table \ref{tab:prizes} shows the robust optimal prizes for $n=3,\ldots,10$. As seen from the table, all prize schedules beyond $n=3$ contain flat regions, i.e., some prize differentials are zero. These flat regions are also evident from Figure \ref{fig:prizes_small_n} showing the prize schedules for $n=3,6,9$ and 12. As discussed above, for $n\ge 4$ a flat region is formed right after the first prize. As $n$ increases, additional flat regions are formed following similar logic.

Next, we consider what happens in large tournaments.

\subsection{Large tournaments}
\label{sec:large_n}

In this section, we characterize the optimal prizes solving (\ref{problem_max}) for $n\to\infty$. Define the prize schedule ${\bf d}^{\infty}$ as
\begin{align}
\label{d_asymptotic}
d^{\infty}_r = \frac{1}{(n-1)r}, \quad r=1,\ldots,n-1.
\end{align}
Furthermore, let $W({\bf d})$ denote the value function of problem (\ref{problem_max}). The following result then holds.
\begin{proposition}
\label{prop_asymptotic}
Prize schedule ${\bf d}^{\infty}$ is asymptotically optimal for problem (\ref{problem_max}), in the sense that $\lim_{n\to\infty}[W({\bf d}^*)-W({\bf d}^{\infty})]=0$. The resulting limit distribution of noise is exponential, $F^{\infty}(t) = 1-e^{-e^{-\bar{H}}(t-\underline{\varepsilon})}$.
\end{proposition}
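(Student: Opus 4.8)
The plan is to sandwich the optimal value $W({\bf d}^*)$ between $W({\bf d}^{\infty})$ and an explicit upper bound, exploiting the minimax structure already set up for (\ref{problem_maxmin}). The starting point is a closed form for the candidate: substituting (\ref{d_asymptotic}) into (\ref{a_def}) and applying the binomial theorem gives
\[
a(z;{\bf d}^{\infty}) = \frac{1}{n-1}\sum_{r=1}^{n-1}\binom{n-1}{r}z^{n-r-1}(1-z)^{r-1} = \frac{1-z^{n-1}}{(n-1)(1-z)}.
\]
Recall from (\ref{m*_d}) that for any ${\bf d}$ the adversary's best response yields value $\int_0^1 a(z;{\bf d})m^*(z;{\bf d})\dd z = e^{-\bar{H}+W({\bf d})}$; since ${\bf d}^{\infty}$ is feasible and ${\bf d}^*$ optimal, $W({\bf d}^{\infty})\le W({\bf d}^*)$, and it remains to produce a matching upper bound.

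For the lower bound I evaluate $W({\bf d}^{\infty})=\int_0^1\log a(z;{\bf d}^{\infty})\dd z$ term by term:
\[
W({\bf d}^{\infty}) = \int_0^1\log(1-z^{n-1})\dd z - \log(n-1) - \int_0^1\log(1-z)\dd z.
\]
The last integral equals $-1$; the first tends to $0$ (by dominated convergence, or from $\int_0^1\log(1-z^{n-1})\dd z = -\sum_{k\ge 1}[k((n-1)k+1)]^{-1} = O(1/n)$). Hence $W({\bf d}^{\infty}) = 1-\log(n-1)+O(1/n) = 1-\log n + O(1/n)$.

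The upper bound comes from a dual certificate, using the easy (weak-duality) direction of the minimax invoked for (\ref{problem_maxmin}): for \emph{any} feasible $m$ the value is at most $\max_{{\bf d}\in\mathcal{D}}\int_0^1 a(z;{\bf d})m\dd z$. By (\ref{MR_through_a})--(\ref{foc_byparts}) this inner objective equals $\sum_r B_r d_r$ with $B_r$ the functional (\ref{B_r_m}) evaluated at $m$, so maximizing over $\mathcal{D}$ (a linear program on $\{d\ge 0:\sum_r rd_r=1\}$) gives $\max_r B_r/r$. I take the exponential certificate $m^{\infty}(z)=e^{1-\bar{H}}(1-z)$, which meets the entropy constraint with equality, $-\int_0^1\log m^{\infty}(z)\dd z=\bar{H}$. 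The decisive computation is
\[
\frac{B_r}{r} = \binom{n-1}{r}\int_0^1 z^{n-r-1}(1-z)^{r-1}m^{\infty}(z)\dd z = e^{1-\bar{H}}\binom{n-1}{r}\frac{(n-r-1)!\,r!}{n!} = \frac{e^{1-\bar{H}}}{n},
\]
which is \emph{independent of $r$}---the constant-hazard-rate property of the exponential noted in the text. Thus the value is at most $e^{1-\bar{H}}/n$, i.e.\ $e^{-\bar{H}}e^{W({\bf d}^*)}\le e^{1-\bar{H}}/n$, giving $W({\bf d}^*)\le 1-\log n$. Combined with the lower bound, $1-\log n+O(1/n)\le W({\bf d}^*)\le 1-\log n$, so $W({\bf d}^*)-W({\bf d}^{\infty})=O(1/n)\to 0$.

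For the limit distribution, $m^{\infty}$ is exactly the limiting adversarial inverse quantile density, and restoring $F$ via (\ref{m-to-F}) with $\varepsilon_0=\underline{\varepsilon}$, $F(\underline{\varepsilon})=0$ gives $t-\underline{\varepsilon}=e^{\bar{H}-1}\int_0^{F(t)}(1-z)^{-1}\dd z=-e^{\bar{H}-1}\log(1-F(t))$, which inverts to the exponential CDF $F^{\infty}(t)=1-e^{-\kappa(t-\underline{\varepsilon})}$ with rate $\kappa$ pinned down by the binding entropy constraint (the rate for which the exponential has differential entropy exactly $\bar{H}$). I would also verify that $m^*(\cdot;{\bf d}^{\infty})$ from (\ref{m*_d}) converges pointwise to $m^{\infty}$ on $[0,1)$, confirming that ${\bf d}^{\infty}$ itself induces this limit. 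The main obstacle is the upper bound: the argument turns on \emph{guessing} the correct dual certificate $m^{\infty}\propto(1-z)$ and recognizing that it renders $B_r/r$ constant in $r$; the remaining primal estimate $\int_0^1\log(1-z^{n-1})\dd z\to 0$ is then routine.
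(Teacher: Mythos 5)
Your proof is correct, and it takes a genuinely different route from the paper's. The paper argues via the first-order structure of the concave program: it shows that the normalized gradient $\ell_{r_\alpha}({\bf d}^{\infty})/r_{\alpha}$ converges pointwise to the Lagrange multiplier $\mu^*=1$ (using the binomial representation $a(z;{\bf d})=(n-1)\mathds{E}[d_{S+1}]$ and concentration of uniform order statistics), and then bounds $W({\bf d}^*)-W({\bf d}^{\infty})$ by the gradient inequality for concave functions. Your argument instead sandwiches $W({\bf d}^*)$ between the exactly computable $W({\bf d}^{\infty})=1-\log(n-1)+\int_0^1\log(1-z^{n-1})\dd z$ and a weak-duality upper bound obtained from the explicit adversarial certificate $m^{\infty}(z)\propto(1-z)$, for which $B_r/r$ is constant in $r$ so that the inner linear program is flat. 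This buys two things the paper's proof does not deliver: an explicit convergence rate $O(1/n)$ rather than bare convergence, and no reliance on upgrading pointwise convergence of $\ell_{r_\alpha}/r_\alpha$ to uniform convergence over $\alpha\in[0,1]$ (a step the paper justifies only by ``compactness of $[0,1]$,'' which by itself is not sufficient). What it costs is the need to guess the dual certificate, as you note; the paper's route explains \emph{why} ${\bf d}^{\infty}$ is the right candidate (it asymptotically equalizes all KT conditions), which is lost in the certificate approach.

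One substantive point: your computation of the limiting distribution gives rate $e^{1-\bar{H}}$, not the $e^{-\bar{H}}$ stated in the proposition, and your value is the consistent one. The inverse quantile density $m(z)=\lambda(1-z)$ has entropy $-\log\lambda+1$, so the binding constraint forces $\lambda=e^{1-\bar{H}}$; equivalently, the exponential with differential entropy exactly $\bar{H}$ has rate $e^{1-\bar{H}}$. The paper's derivation writes $\int_0^1\log a(z;{\bf d}^{\infty})\dd z\sim-\log(n-1)$, dropping the contribution $-\int_0^1\log(1-z)\dd z=1$, which is exactly the missing factor of $e$. Your exact closed form $a(z;{\bf d}^{\infty})=\frac{1-z^{n-1}}{(n-1)(1-z)}$ makes this unambiguous: $m^*(z;{\bf d}^{\infty})=e^{1-\bar{H}-\delta_n}\frac{1-z}{1-z^{n-1}}\to e^{1-\bar{H}}(1-z)$ pointwise on $[0,1)$, with $\delta_n=-\int_0^1\log(1-z^{n-1})\dd z\to 0$. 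So you have in fact proved a corrected version of the second claim; the only loose end in your write-up is that you leave this last pointwise convergence as ``to be verified,'' but it is immediate from your closed form.
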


Proposition \ref{prop_asymptotic} is a powerful result showing that ${\bf d}^{\infty}$ delivers a vanishing value function gap, relative to the optimum, even though the value function itself diverges to $-\infty$ (because $a(z;{\bf d})$ goes to zero). The proof is based on showing that $\frac{\ell_r({\bf d}^{\infty})}{r}$ converges to 1, in an appropriate sense, i.e., ${\bf d}^{\infty}$ asymptotically satisfies all KT conditions (\ref{KT_conditions}) with equality. This is consistent with all prize differentials being positive in ${\bf d}^{\infty}$. 

An important caveat to Proposition \ref{prop_asymptotic} is that it describes the asymptotic behavior of the solution to (\ref{problem_max}), which is a \emph{relaxed} version of the principal's problem. The reason it is relaxed is that we assume that the equilibrium existence conditions are satisfied. This is not the case for a fixed $\bar{H}$ and $n\to\infty$ because eventually, for $n$ large enough, the symmetric pure strategy equilibrium ceases to exist. However, for any fixed $n$, there is a large enough $\bar{H}$ such that the equilibrium exists. Therefore, ${\bf d}^{\infty}$ can be thought of as an approximation of the optimal solution in large tournaments with a large enough entropy bound. 

Based on Proposition \ref{prop_asymptotic}, we refer to ${\bf d}^{\infty}$ as the \emph{asymptotic solution} to problem (\ref{problem_max}). The resulting asymptotic robust prize schedule, ${\bf v}^{\infty}$, is
\begin{equation}
\label{v_r theory}
v^{\infty}_r = \frac{1}{n-1}\sum_{k=r}^{n-1}\frac{1}{k}=\frac{H_{n-1}-H_{r-1}}{n-1},
\end{equation}
where $H_r$ is the harmonic number.

\begin{figure}[t]
\centering
\includegraphics[width=3in]{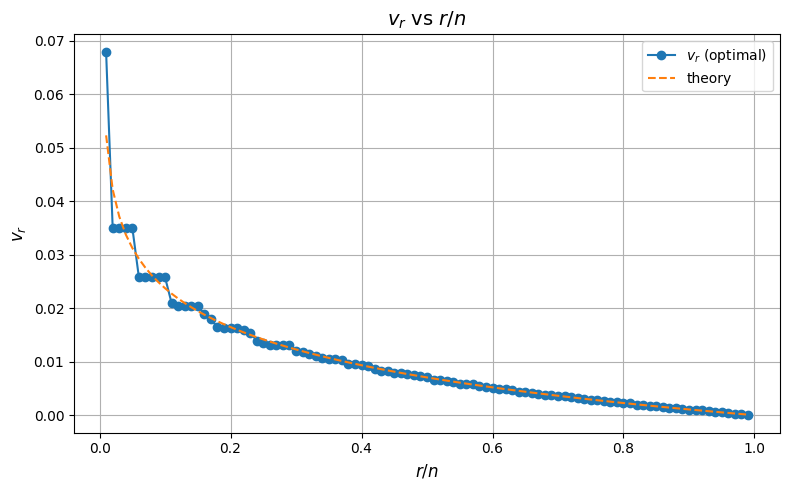}
\hfill
\includegraphics[width=3in]{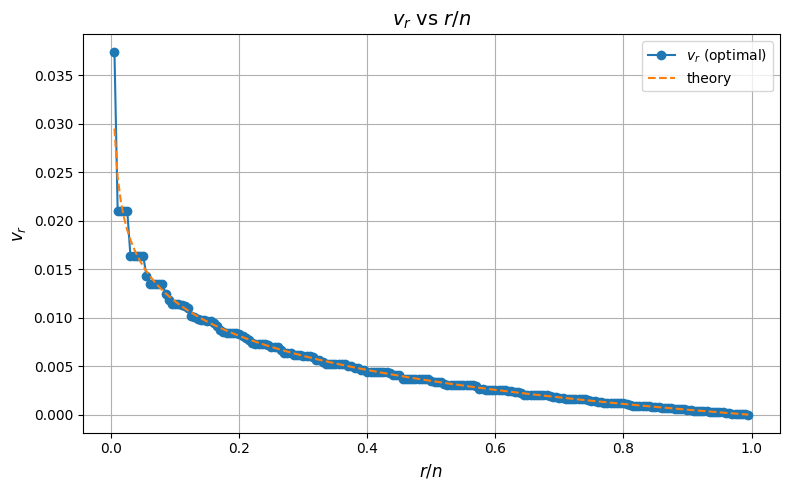}
\caption{Optimal prizes as a function of reverse rank percentile $r/n$ for $n=100$ (left) and $n=200$ (right). These values are obtained by numerically solving problem (\ref{problem_max}) using Python (code is available upon request). The dashed curve shows the asymptotic solution ${\bf v}^{\infty}$, Eq. (\ref{v_r theory}).}
\label{fig:prizes_large_n}
\end{figure}

This prize schedule is illustrated in Figure \ref{fig:prizes_large_n}, which shows robust optimal prizes for $n=100$ (left) and 200 (right) as functions of (reverse) ranking percentile, $\frac{r}{n}$. As seen from the figure, the asymptotic solution is a good approximation of the exact values.

The exponential distribution has a constant hazard rate; thus, it is the only distribution that supports \emph{any} feasible prize allocation as optimal. The asymptotic robust prize allocation has positive prize differentials for all ranks, so the two are consistent. 

For a large but finite $n$, the optimal distribution of noise approaches exponential, but it has a bounded support,\footnote{It can be shown that the upper bound of the support increases as $\log(n-1)$.} producing the flat regions in the optimal prize schedule. To understand why, consider a truncated exponential distribution, which has a nearly constant hazard rate everywhere except at the very top where its hazard rate is increasing. In that range, a WTA-like prize schedule becomes locally optimal, which produces the jump between $v_2$ and $v_1$ and a flat region between $v_2$ and $v_k$, for some $k$. Similar, smaller jumps and flat regions emerge also further away from the top, gradually disappearing as $r$ rises. However, the relative size of these jumps and the length of the flat regions in terms of ranking percentiles are shrinking with $n$. The asymptotic prize schedule becomes a good approximation for $n$ large enough such that the (relative) size of the flat regions is negligible. As seen from Figure \ref{fig:prizes_large_n}, the approximation works very well already for $n=100$.

\subsection{Prize inequality}

As discussed in Introduction, an important characteristic of tournament prize schemes is the level of inequality. In many studies of contests \citep[e.g.,][]{Vojnovic:2015,Fang-et-al:2020,Drugov-Ryvkin:2020_prizes}, it is possible to rank prize schedules using the majorization order \cite{Marshall-et-al:1979}. However, there is no such ranking across different values of $n$ in our case. In particular, Lorenz corves for different values of $n$ intersect multiple times. However, the Gini coefficient declines monotonically with $n$ and approaches $1/2$ for $n$ large, where the asymptotic value follows from the following result.
\begin{proposition}
\label{prop_Gini}
For prizes ${\bf v}^{\infty}$ in (\ref{v_r theory}), the Gini coefficient is equal to $\frac{1}{2}$ for any $n$.
\end{proposition}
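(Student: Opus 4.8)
The plan is to compute the Gini coefficient directly from its mean-difference definition and exploit two structural features of the problem: the budget normalization $\sum_{r=1}^n v_r=1$ and the fact that the prizes are already sorted in weakly decreasing order. Writing the Gini coefficient as $G=\frac{1}{2n\sum_r v_r}\sum_{i=1}^n\sum_{j=1}^n|v_i-v_j|$ and using $\sum_r v_r=1$ collapses the denominator to $2n$, so it remains only to evaluate the total absolute difference $\sum_{i,j}|v_i-v_j|$.

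Since $v_1\ge v_2\ge\cdots\ge v_n$, every unordered pair contributes $v_i-v_j$ with $i<j$, giving $\sum_{i,j}|v_i-v_j|=2\sum_{i<j}(v_i-v_j)$. The key step is to re-express this in terms of the prize differentials $d_k=v_k-v_{k+1}$. Writing $v_i-v_j=\sum_{k=i}^{j-1}d_k$ and switching the order of summation, each differential $d_k$ is counted once for every pair $(i,j)$ with $i\le k<j$, and there are exactly $k(n-k)$ such pairs. This yields the identity $\sum_{i<j}(v_i-v_j)=\sum_{k=1}^{n-1}k(n-k)d_k$, and hence $G=\frac{1}{n}\sum_{k=1}^{n-1}k(n-k)d_k$. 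This intermediate formula holds for any feasible monotone schedule, and I expect this combinatorial re-indexing to be the one genuinely substantive step; everything afterwards is a cancellation tailored to ${\bf d}^{\infty}$.

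Substituting $d_k^{\infty}=\frac{1}{(n-1)k}$ makes the factor $k$ cancel, leaving $G=\frac{1}{n(n-1)}\sum_{k=1}^{n-1}(n-k)$. The remaining sum is the triangular number $\sum_{k=1}^{n-1}(n-k)=\frac{n(n-1)}{2}$, so $G=\frac{1}{n(n-1)}\cdot\frac{n(n-1)}{2}=\frac12$, independent of $n$, as claimed. The fact that the answer is exactly $1/2$ for every $n$ (not merely in the limit) is a direct consequence of the harmonic structure $d_k^{\infty}\propto 1/k$, which is precisely what annihilates the crossing-pair weight $k$. As a sanity check I would verify the derivation on $n=2$ and $n=3$, where the prizes are $(1,0)$ and $(\tfrac34,\tfrac14,0)$ respectively, both of which give $G=\tfrac12$ by direct inspection.
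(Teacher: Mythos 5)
Your proof is correct, and every step checks out: the crossing-pair count $k(n-k)$ is right, the substitution $d_k^{\infty}=\frac{1}{(n-1)k}$ cancels the factor $k$ exactly as you say, and the numerical checks for $n=2,3$ match the schedule in (\ref{v_r theory}). The paper proves the same identity by substituting $v_r=\frac{H_{n-1}-H_{r-1}}{n-1}$ directly and evaluating $\sum_{i<j}H_{j-1}$ and $\sum_{i<j}H_{i-1}$ separately via an order-of-summation swap; after some algebra both routes collapse to the same final sum $\frac{1}{n(n-1)}\sum_{k=1}^{n-1}(n-k)$. The underlying combinatorics is identical---in both cases one is counting, for each index $k$, the pairs $(i,j)$ with $i\le k<j$---but your organization is cleaner and slightly more general: the intermediate identity
\[
G=\frac{1}{n}\sum_{k=1}^{n-1}k(n-k)\,d_k
\]
holds for \emph{any} monotone unit-budget schedule, and it makes transparent that $d_k\propto 1/k$ is precisely the structure that kills the weight $k$ and yields $G=\tfrac12$ exactly (not just asymptotically). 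The paper's version reaches the same endpoint but buries this observation inside harmonic-number manipulations. No gaps; your argument could be substituted for the paper's as written.
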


\begin{figure}
\centering
\includegraphics[width=4in]{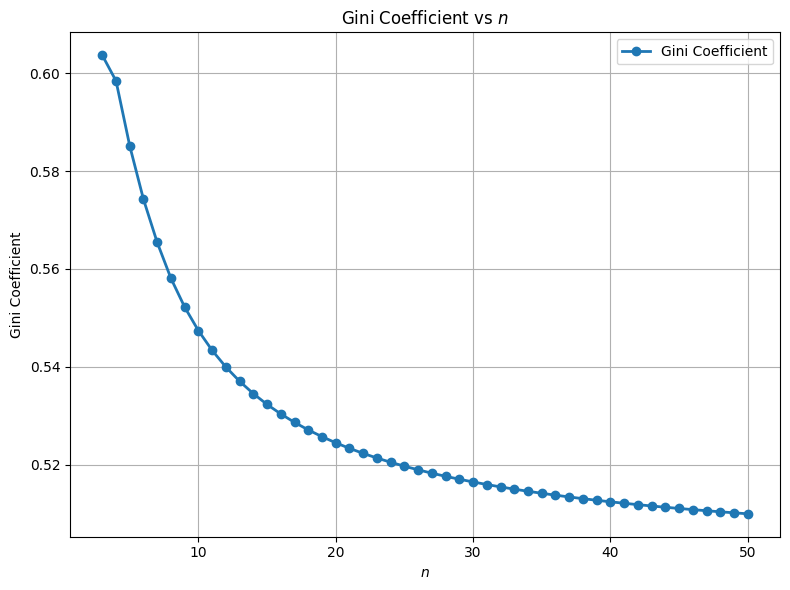}
\caption{The Gini coefficient for robust optimal prizes as a function of $n$, for $n=3,\ldots,50$.}
\label{fig:Gini}
\end{figure}

Figure \ref{fig:Gini} shows the Gini coefficients for prize schedules for $n=3,\ldots,50$. The aggregate level of inequality slowly declines, approaching $\frac{1}{2}$. This is comparable to the Gini coefficient for earnings in professional golf (PGA) equal to 0.633; 0.626 in baseball (MLB); 0.512 in American football (NFL); 0.528 in basketball (NBA); and 0.458 in hockey (NHL) \citep{Scully:2002}. In these leagues, measures such as salary caps or institutional support for low-ranked players are in place to help equalize earnings. 
Professional tennis is arguably the most inequitable sport with the Gini of 0.95.\footnote{See \url{https://theconversation.com/rich-rewards-for-those-at-the-top-in-tennis-but-what-of-the-rest-35961}.}$^,$\footnote{
For a perspective, the Gini coefficient for disposable income is 0.42 in the US (2023), 0.32 in the UK (2021) and 0.31 in France (2022). Only a few countries, such as Brazil, Colombia and Botswana, have the Gini coefficient above 0.5. South Africa is the most unequal country with the value 0.63 (2014) (see \url{https://data.worldbank.org/indicator/SI.POV.GINI}). For US firms, the average Gini coefficient is 0.34 \citep[][based on \emph{Glassdoor} data]{Green-et-al:2025}.}

\section{Conclusions}
\label{sec:conclusions}

Our analysis shows how tournament prize schedules might be adapted to uncertainty about the environment. Robust tournament design attempts to avoid zero prize differentials and assigns rewards more broadly, especially in large tournaments. As the number of participants grows, the optimal prize schedule becomes increasingly convex, converging to a harmonic structure. While such tournaments preserve substantial inequality to maintain incentives, the Gini coefficient of the optimal prize schedule decreases with tournament size and converges to 1/2 asymptotically.

Our results have important practical implications. First, they show that prize sharing can arise as optimal response to limited information on the part of the designer. Notably, the optimal degree of prize sharing is not extreme; it appears to be driven primarily by the need to maintain positive \emph{differentials} throughout the prize schedule. Second, the harmonic prize structure we derive offers a simple rule of thumb for designing incentives under uncertainty. Overall, the level of inequality we identify is consistent with inequality in rewards in major professional sports leagues where competitive balance is institutionally managed. This suggests that the robust prize schedule strikes the right balance between high-powered incentives at the top and control over inequality.  

More broadly, our work contributes to a growing literature on robust mechanism design that seeks to understand how optimal mechanisms change when the designer lacks full knowledge of the environment. Our setting is distinct in that the principal faces ambiguity not about agents' types or preferences, but about \emph{technology}---the mapping from effort to output. Future research could explore extensions to asymmetric agents or settings where both agents and the principal face ambiguity.

\appendix

\section{Proofs}
\label{app_proofs}

\paragraph{Solving the minimization problem (\ref{problem_min})} 
The Lagrangian takes the form
\[
\mathcal{L} = \int_0^1 \left[a(z;{\bf d})m(z) + \lambda\left(-\log m(z) - \bar{H}\right)\right]\dd z
\]
for $\lambda>0$, yielding the solution $m^*(z;{\bf d}) = \frac{\lambda}{a(z;{\bf d})}$. Using the constraint, 
\[
-\int_0^1\log m^*(z;{\bf d})\dd z = -\int_0^1\log \frac{\lambda}{a(z;{\bf d})}\dd z = -\log\lambda + \int_0^1\log a(z;{\bf d})\dd z = \bar{H},
\]
we obtain 
\[
\lambda = \exp\left[-\bar{H} + \int_0^1\log a(z;{\bf d})\dd z\right],
\]
and (\ref{m*_d}) follows.

\bigskip

\begin{proof}{\bf of Proposition \ref{prop_main_robust}}
We will show that $d_1>0$ and $d_{n-1}>0$ must hold for the system of KT conditions (\ref{KT_conditions}) to be compatible. Suppose, by contradiction, that $d_1=0$ and let $\underline{r}=\min\{r:d_r>0\}$ denote the lowest $r$ such that $d_r>0$. Then we have $a(z;{\bf d})\sim z^{n-\underline{r}-1}$ for $z\to 0$, and hence $\ell_{n-1}({\bf d})$ diverges unless $\underline{r}=n-1$. In the latter case, we have $a(z;{\bf d})\sim (1-z)^{n-2}$ for $z\to 1$, and $\ell_1({\bf d})$ diverges. The argument for why $d_{n-1}=0$ leads to a contradiction is similar.
\end{proof}

\bigskip

\begin{proof}{\bf of Proposition \ref{prop_asymptotic}}
Define a piecewise constant function
\[
r_{\alpha} = \left\{\begin{array}{ll}
1, & \alpha=0\\
\lceil (n-1)\alpha \rceil, & \alpha\in(0,1] 
\end{array}
\right.
\]
that maps $[0,1]$ monotonically to ranks $\{1,\ldots,n-1\}$, where rank $r_{\alpha}$ corresponds to performance percentile $1-\alpha$. Note that $r_{\alpha}$ takes integer values $k=1,\ldots,n-1$ in intervals $\alpha\in(\frac{k-1}{n-1},\frac{k}{n-1}]$. Therefore, for any function $\phi(k)$ of a discrete index $k$ we can write
\[
\int_0^1\phi(r_{\alpha})\dd\alpha = \sum_{k=1}^{n-1}\phi(k)\int_{\frac{k-1}{n-1}}^{\frac{k}{n-1}}\dd\alpha = \frac{1}{n-1}\sum_{k=1}^{n-1}\phi(k),
\]
and hence sums can be represented as integrals in the form
\begin{align}
 \label{sum_integral}
\sum_{k=1}^{n-1}\phi(k) = (n-1)\int_0^1\phi(r_{\alpha})\dd\alpha. 
\end{align}
We start with the following lemma.
\begin{lemma}
\label{lemma_limit_gradient}
For any $\alpha\in[0,1]$, $\lim_{n\to\infty}\frac{\ell_{r_{\alpha}}({\bf d}^{\infty})}{r_{\alpha}}=1$.
\end{lemma}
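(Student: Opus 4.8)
The plan is to begin with the key simplification: computing $a(z;\mathbf{d}^{\infty})$ in closed form. Substituting $d_r^{\infty}=\frac{1}{(n-1)r}$ into (\ref{a_def}) cancels the leading factor $r$, and writing $k=n-1$ leaves $a(z;\mathbf{d}^{\infty})=\frac{1}{k}\sum_{r=1}^{k}\binom{k}{r}z^{k-r}(1-z)^{r-1}$. Factoring out $(1-z)^{-1}$ turns the summand into $\binom{k}{r}z^{k-r}(1-z)^{r}$, and completing the binomial identity $\sum_{r=0}^{k}\binom{k}{r}z^{k-r}(1-z)^{r}=1$ telescopes the sum to
\[
a(z;\mathbf{d}^{\infty})=\frac{1-z^{k}}{k(1-z)}=\frac{1}{k}\sum_{j=0}^{k-1}z^{j}.
\]
This is the crucial step, since it reduces the gradient component to a single integral,
\[
\frac{\ell_{r}(\mathbf{d}^{\infty})}{r}=k\binom{k}{r}\int_{0}^{1}\frac{z^{k-r}(1-z)^{r}}{1-z^{k}}\,\dd z.
\]

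Next I would expand $\frac{1}{1-z^{k}}=\sum_{m\ge 0}z^{km}$ on $[0,1)$ and integrate term by term (justified by monotone convergence, every summand being nonnegative). Each term is a Beta integral, giving $\frac{\ell_{r}(\mathbf{d}^{\infty})}{r}=\sum_{m\ge 0}T_{m}$ with
\[
T_{m}=\frac{k\,k!}{(k-r)!}\cdot\frac{(k-r+km)!}{(k+km+1)!},\qquad T_{0}=\frac{k}{k+1}=\frac{n-1}{n}.
\]
The leading term already satisfies $T_{0}\to 1$, so everything reduces to showing the tail $\sum_{m\ge 1}T_{m}$ vanishes. Writing $\frac{T_{m}}{T_{0}}=\prod_{j=0}^{r}\frac{k-r+1+j}{k-r+km+1+j}$ and bounding each of the $r+1$ factors by $\frac{k+1}{k+1+km}\le\frac{2}{2+m}$ (valid for $k\ge 1$) yields $\frac{T_{m}}{T_{0}}\le\left(\frac{2}{2+m}\right)^{r+1}$, whence $\sum_{m\ge 1}T_{m}\le C\,T_{0}\left(\frac{2}{3}\right)^{r-1}$ for an absolute constant $C$.

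The final step is to substitute $r=r_{\alpha}$. For $\alpha\in(0,1]$ we have $r_{\alpha}=\lceil(n-1)\alpha\rceil\to\infty$, so the geometric bound forces the tail to zero and $\frac{\ell_{r_{\alpha}}(\mathbf{d}^{\infty})}{r_{\alpha}}\to 1$, as claimed. The main obstacle is the boundary value $\alpha=0$, where $r_{\alpha}\equiv 1$ is frozen and the tail does \emph{not} vanish: passing to the limit term by term gives $\frac{\ell_{r}(\mathbf{d}^{\infty})}{r}\to\zeta(r+1)$ for every \emph{fixed} $r$, so in particular $\frac{\ell_{1}(\mathbf{d}^{\infty})}{1}\to\zeta(2)=\frac{\pi^{2}}{6}\neq 1$. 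The convergence to $1$ therefore hinges essentially on $r_{\alpha}\to\infty$, which holds for all $\alpha>0$; the single point $\alpha=0$ is an exception of measure zero that is immaterial once the lemma is fed into the integral representation (\ref{sum_integral}). I would accordingly either restrict the statement to $\alpha\in(0,1]$ or flag this caveat explicitly. Producing a tail bound uniform enough in $r$ to survive that downstream integral is where the genuine care is required.
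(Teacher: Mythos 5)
Your argument is correct and takes a genuinely different route from the paper, and in doing so it exposes a real problem with the statement itself. The paper's proof is probabilistic: it writes $a(z;{\bf d})=(n-1)\mathds{E}[d_{S+1}]$ for $S\sim{\rm Binomial}(n-2,1-z)$, represents $\ell_r/r$ as an expectation of $1/(r\,a(\cdot))$ over a uniform order statistic, and concludes from convergence in probability of that order statistic to $1-\alpha$. Your route --- the closed form $a(z;{\bf d}^{\infty})=\frac{1-z^{n-1}}{(n-1)(1-z)}$, the geometric expansion of $(1-z^{k})^{-1}$, the Beta integrals, and the explicit bound $\sum_{m\ge 1}T_m\le C\,T_0\,(2/3)^{r-1}$ --- is fully correct (I checked the binomial identity, the expression for $T_m$, and the factor-by-factor bound $\frac{k+1}{k+1+km}\le\frac{2}{2+m}$) and buys a quantitative estimate, $|\ell_r({\bf d}^{\infty})/r-1|\le\frac{1}{n}+C(2/3)^{r-1}$, that holds uniformly in $n$ and $r$. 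Most importantly, your boundary observation is right: at $\alpha=0$ one has $r_\alpha\equiv 1$ and $\ell_1({\bf d}^{\infty})=(n-1)^2\int_0^1\frac{z^{n-2}(1-z)}{1-z^{n-1}}\dd z=\sum_{j\ge 1}\frac{1}{j(j+\frac{1}{n-1})}\to\zeta(2)=\pi^2/6\ne 1$, so the lemma as stated for all $\alpha\in[0,1]$ is false at $\alpha=0$ (and more generally $\ell_r/r\to\zeta(r+1)$ for any fixed $r$). This pinpoints the gap in the paper's own proof: convergence in probability of $Z_{n-r_\alpha:n-1}$ does not justify passing the limit inside the expectation, because the integrand varies with $n$ and the top order statistic concentrates at scale $1-z\sim 1/n$, exactly where $\frac{(n-1)(1-z)}{1-z^{n-1}}$ stays bounded away from $1$. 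For $\alpha\in(0,1]$ the interchange is valid but needs an argument; your tail bound supplies one, the paper supplies none.

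One caution on your closing remark: you are right that the real work is making the bound survive the downstream integral, but do not treat $\alpha=0$ as an immaterial measure-zero exception. In the proof of Proposition \ref{prop_asymptotic} the integrand is multiplied by $n-1$, so the interval $\alpha\in(0,\frac{1}{n-1}]$ on which $r_\alpha=1$ contributes $[\ell_1({\bf d}^{\infty})-1](d_1-d^{\infty}_1)\approx(\zeta(2)-1)\,d_1$, which does not vanish unless one separately shows $d_1^*\to 0$. So restricting the lemma to $\alpha\in(0,1]$ does not by itself repair the proposition; the low ranks must be handled explicitly, for which your uniform bound $\frac{1}{n}+C(2/3)^{r-1}$ is the right starting point.
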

\begin{proof}{\bf of Lemma \ref{lemma_limit_gradient}}
We can rewrite (\ref{a_def}) as
\begin{align}
\label{a_binomial}
& a(z;{\bf d}) = \sum_{r=1}^{n-1}r\binom{n-1}{r}z^{n-r-1}(1-z)^{r-1}d_r = (n-1)\sum_{r=0}^{n-2}\binom{n-2}{r}z^{n-2-r}(1-z)^rd_{r+1}\nonumber\\
& = (n-1)\mathds{E}\left[d_{S+1}\right],
\end{align}
where $S\sim{\rm Binomial}(n-2,1-z)$. This implies, in particular, that
\[
r_{\alpha}a(1-\alpha;{\bf d}^{\infty}) = \mathds{E}\left[\frac{r_{\alpha}}{S+1}\right]\to 1
\]
for $n\to\infty$, for any $\alpha\in[0,1]$.

Note also that
\[
\frac{\ell_{r_\alpha}({\bf d}^{\infty})}{r_\alpha} = \mathds{E}\left[\frac{1}{r_\alpha a(Z_{n-r_{\alpha}:n-1},{\bf d}^{\infty})}\right],
\]
where $Z_{k:n-1}$ is the $k$-th order statistic of the uniform random variable on $[0,1]$.\footnote{We follow the standard enumeration of order statistics such that $Z_{1:n-1}\le\ldots\le Z_{n-1:n-1}$ \citep[e.g.,][]{David-Nagaraja:2003}.} The result follows because $Z_{n-r_{\alpha}:n-1}$ converges in probability to $1-\alpha$ for each $\alpha\in[0,1]$.
\end{proof}

We now proceed with the proof of the proposition. The concavity of $W({\bf d})$ implies, for any ${\bf d}\in\mathcal{D}$,
\begin{align*}
& W({\bf d}) - W({\bf d}^{\infty})\le \sum_{r=1}^{n-1}\ell_r({\bf d}^{\infty})(d_r-d^{\infty}_r) = \sum_{r=1}^{n-1}[\ell_r({\bf d}^{\infty})-r](d_r-d^{\infty}_r) \\
& = \sum_{r=1}^{n-1}r\left[\frac{\ell_r({\bf d}^{\infty})}{r}-1\right](d_r-d^{\infty}_r) = (n-1)\int_0^1r_{\alpha}\left[\frac{\ell_{r_{\alpha}}({\bf d}^{\infty})}{r_{\alpha}}-1\right](d_{r_{\alpha}}-d^{\infty}_{r_{\alpha}})\dd\alpha,
\end{align*}
where the first equality follows because both ${\bf d}$ and ${\bf d}^{\infty}$ are feasible, and the integral representation is based on (\ref{sum_integral}).

Fix an $\epsilon>0$. From Lemma \ref{lemma_limit_gradient}, for each $\alpha\in[0,1]$ there exists a finite index $N_{\epsilon,\alpha}$ such that for all $n>N_{\epsilon,\alpha}$ we have $|\frac{\ell_{r_{\alpha}}({\bf d}^{\infty})}{r_{\alpha}}-1|<\epsilon$. Then, due to compactness of $[0,1]$, there exists an $N_{\epsilon}$ such that $|\frac{\ell_{r_{\alpha}}({\bf d}^{\infty})}{r_{\alpha}}-1|<\epsilon$ holds for all $n>N_{\epsilon}$ uniformly for all $\alpha\in[0,1]$. Considering such $n$, we have
\begin{align*}
& (n-1)\int_0^1r_{\alpha}\left[\frac{\ell_{r_{\alpha}}({\bf d}^{\infty})}{r_{\alpha}}-1\right](d_{r_{\alpha}}-d^{\infty}_{r_{\alpha}})\dd\alpha \le \epsilon(n-1)\int_0^1r_{\alpha}|d_{r_{\alpha}}-d^{\infty}_{r_{\alpha}}|\dd\alpha \\
& \le \epsilon(n-1)\int_0^1r_{\alpha}(d_{r_{\alpha}}+d^{\infty}_{r_{\alpha}})\dd\alpha = 2\epsilon. 
\end{align*}
Here, we used the triangle inequality and the feasibility of ${\bf d}$ and ${\bf d}^{\infty}$, which implies, together with (\ref{sum_integral}), that $(n-1)\int_0^1r_{\alpha}d_{r_{\alpha}}\dd\alpha=(n-1)\int_0^1r_{\alpha}d^{\infty}_{r_{\alpha}}\dd\alpha=1$.

Thus, we showed that $W({\bf d}) - W({\bf d}^{\infty})$ is bounded above by a sequence that converges to zero. Since this holds for any ${\bf d}\in\mathcal{D}$, it also holds for ${\bf d}={\bf d}^*$. But we know that $W({\bf d}^*) - W({\bf d}^{\infty})\ge 0$ due to the optimality of ${\bf d}^*$, which implies the asymptotic optimality result.

To derive the corresponding asymptotic distribution of noise, note that $a(z;{\bf d}^{\infty})\sim \frac{1}{r_{1-z}}$, which gives
\[
\int_0^1\log a(z;{\bf d}^{\infty})\dd z \sim -\log(n-1),
\]
and hence, from (\ref{m*_d}),
\[
m^*(z;{\bf d}^{\infty}) \sim \exp[-\bar{H}-\log(n-1)](n-1)(1-z) = e^{-\bar{H}}(1-z).
\]
The resulting distribution is $F^{\infty}$.
\end{proof}

\bigskip

\begin{proof}{\bf of Proposition \ref{prop_Gini}}
By definition, for a prize schedule ${\bf v}\in\mathcal{V}$ the Gini coefficient is
\[
G = \frac{1}{2n}\sum_{i,j=1}^n|v_i-v_j| = \frac{1}{n}\sum_{i<j}(v_i-v_j).
\]
From (\ref{v_r theory}), $v_r = \frac{1}{n-1}(H_{n-1}-H_{r-1})$, which gives
\begin{align*}
G = \frac{1}{n(n-1)}\sum_{i<j}(H_{j-1}-H_{i-1}),
\end{align*}
where
\begin{align*}
& \sum_{i<j}H_{j-1} = \sum_{j=1}^n\sum_{i=1}^{j-1}H_{j-1} = \sum_{j=1}^n(j-1)H_{j-1} = \sum_{j=1}^n(j-1)\sum_{k=1}^{j-1}\frac{1}{k} = \sum_{k=1}^{n-1}\frac{1}{k}\sum_{j=k+1}^n(j-1)\\
& = \sum_{k=1}^{n-1}\frac{1}{2k}(k+n-1)(n-k).
\end{align*}
Similarly,
\begin{align*}
& \sum_{i<j}H_{i-1} = \sum_{i=1}^n\sum_{j=i+1}^nH_{i-1} = \sum_{i=1}^n(n-i)H_{i-1} = \sum_{i=1}^n(n-i)\sum_{k=1}^{i-1}\frac{1}{k} = \sum_{k=1}^{n-1}\frac{1}{k}\sum_{i=k+1}^n(n-i)\\
& = \sum_{k=1}^{n-1}\frac{1}{2k}(n-k-1)(n-k).
\end{align*}
Combining the two expressions,
\[
G = \frac{1}{n(n-1)}\sum_{k=1}^{n-1}\frac{1}{2k}(2k)(n-k) = \frac{1}{n(n-1)}\sum_{k=1}^{n-1}(n-k) = \frac{1}{n(n-1)}\frac{n(n-1)}{2}=\frac{1}{2}.
\]
\end{proof}

\section{Small tournaments}

\subsection{\texorpdfstring{$n=3$}{n=3}}
\label{app_n=3}

For ${\bf d}=(d_1,d_2)$, we have $a(z;{\bf d}) = 2zd_1+2(1-z)d_2$, and the KT conditions are
\begin{align*}
& \ell_1 = \int_0^1\frac{2z}{2zd_1+2(1-z)d_2}\dd z = \frac{d_1-d_2-d_2\log\frac{d_1}{d_2}}{(d_1-d_2)^2}\le \mu,\\
& \ell_2 = \int_0^1\frac{2(1-z)}{2zd_1+2(1-z)d_2}\dd z = \frac{d_2-d_1-d_1\log\frac{d_2}{d_1}}{(d_1-d_2)^2}\le 2\mu.
\end{align*}
It is clear immediately that $d_1=0$ cannot be a solution because $\ell_1$ would diverge, and similarly $d_2=0$ cannot be a solution because then $\ell_2$ would diverge; thus, both KT conditions hold with equality. Combining them, we obtain
\[
2\left(d_1-d_2-d_2\log\frac{d_1}{d_2}\right) = d_2-d_1-d_1\log\frac{d_2}{d_1},
\]
which simplifies to
\[
\log\frac{d_1}{d_2} = 3(d_1-d_2).
\]
It is easy to see that $d_1=d_2$ satisfies this equation; however, such a solution is not relevant because it would produce $\ell_2=\ell_1$, whereas $\ell_2=2\ell_1$ must hold at the optimum. Recall also that $d_1$ and $d_2$ are connected via the budget constraint $d_1+2d_2=1$, which leads to the equation
\begin{align}
\label{eq_n=3}
\log\frac{2d_1}{1-d_1} - \frac{3(3d_1-1)}{2} = 0.
\end{align}
The graph of the left-hand side of (\ref{eq_n=3}) as a function of $d_1$ is shown in Figure \ref{fig:roots_n=3}. 
\begin{figure}
\begin{center}
\includegraphics[width=3in]{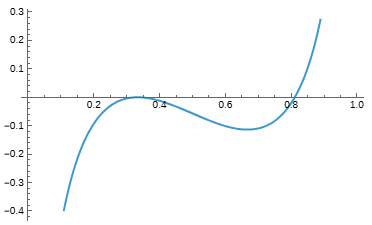}
\end{center}
\caption{The left-hand side of (\ref{eq_n=3}) as a function of $d_1$.}
\label{fig:roots_n=3}
\end{figure}
As expected, $d_1=\frac{1}{3}$ is the root that generates the irrelevant solution; however, the equation has a second root, $d_1^*\approx 0.8109$, which gives $d_2^* \approx 0.0945$. The resulting prize schedule is ${\bf v}^* = (0.9055,0.0945,0)$.

This gives
\begin{align*}
& a(z;{\bf d}^*) = 2d_2^* + 2z(d_1^*-d_2^*) = 0.1890 + 1.4328z,\\
& \int_0^1\log a(z;{\bf d}^*)\dd z = -0.2328,\\
& m^*(z;{\bf d}^*) = \frac{e^{-\bar{H}-0.2328}}{0.1890 + 1.4328z}.
\end{align*}
Notice that $\frac{1}{m(z)}$ is integrable; therefore, the support of the distribution of noise is bounded. Using (\ref{m-to-F}) and fixing an $\underline{\varepsilon}$, we recover the resulting distribution:
\[
t - \underline{\varepsilon} = e^{\bar{H}+0.2328}\int_0^{F(t)}(0.1890 + 1.4328z)\dd z = e^{\bar{H}}[0.2386F(t) + 0.9042F(t)^2],
\]
giving
\[
F^*(t) = F^*(t) = -0.1287 + \left[0.0166+1.1060e^{-\bar{H}}(t-\underline{\varepsilon})\right]^{1/2}, \quad {\rm supp}(F)=[\underline{\varepsilon},\underline{\varepsilon}+1.1905e^{\bar{H}}]
\]
and the length of the support is $1.1369e^{\bar{H}}$.

The corresponding density is
\[
f^*(t) = 0.553e^{-\bar{H}}\left[0.02506+1.1060e^{-\bar{H}}(t-\underline{\varepsilon})\right]^{-1/2}.
\]
This is the (transformed) ${\rm Beta}(\frac{1}{2},1)$ distribution and also the ${\rm Kumaraswamy}(\frac{1}{2},1)$ distribution. It is log-convex and has a U-shaped hazard rate.

\subsection{\texorpdfstring{$n=4$}{n=4}}

\begin{proof}{\bf of Lemma \ref{lemma_n=4}}
It is sufficient to show that the system of KT conditions is satisfied for some ${\bf d}=(d_1,0,d_3)$. We have $a(z;{\bf d})=3z^2d_1+3(1-z)^2d_3$, and (\ref{KT_conditions}) gives
\begin{align*}
& \ell_1 = \int_0^1\frac{3z^2\dd z}{3z^2d_1+3(1-z)^2d_3} = \frac{1}{d_1+d_3}\left[1+\frac{d_3}{d_1+d_3}\log\frac{d_1}{d_3}-\frac{d_3(d_1-d_3)}{(d_1+d_3)\sqrt{d_1d_3}}\frac{\pi}{2}\right] = \mu,\\
& \ell_3 = \int_0^1\frac{3(1-z)^2\dd z}{3z^2d_1+3(1-z)^2d_3} = \frac{1}{d_1+d_3}\left[1+\frac{d_1}{d_1+d_3}\log\frac{d_3}{d_1}+\frac{d_1(d_1-d_3)}{(d_1+d_3)\sqrt{d_1d_3}}\frac{\pi}{2}\right] = 3\mu.
\end{align*}
Combining the two equations and setting $\kappa=\frac{d_1}{d_3}$, we obtain
\[
\frac{\kappa+3}{\kappa+1}\left(\frac{\kappa-1}{\sqrt{\kappa}}\frac{\pi}{2}-\log\kappa\right) = 2.
\]
This equation has a unique root $\kappa^*\approx 6.896$, which, together with the budget constraint $d_1+3d_3=1$, gives $d_1^*=0.6968$, $d_3^*=0.1011$, and $\mu^*=1$. It remains to check that ${\bf d}^*=(0.6968,0,0.1011)$ satisfies the KT condition for $r=2$. Indeed,
\[
\ell_2 = \int_0^1\frac{6z(1-z)\dd z}{3z^2d_1^*+3(1-z)^2d_3^*} \approx 1.919 < 2\mu^*=2. 
\]
\end{proof}

\begin{figure}[h]
\centering
\includegraphics[width=5in]{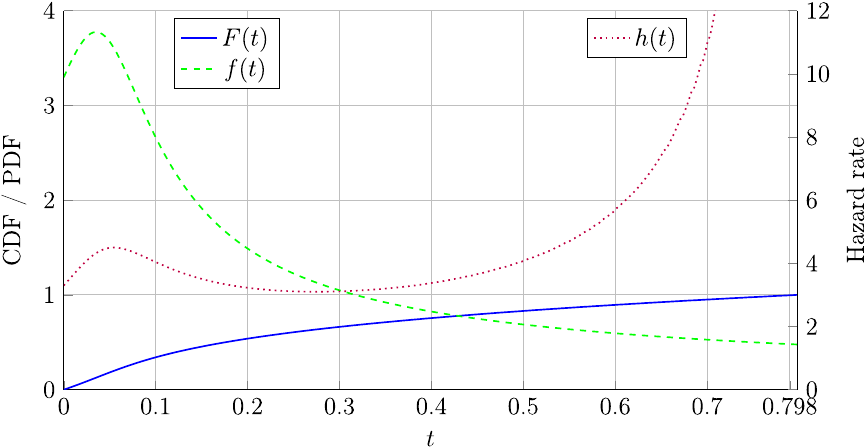}
\caption{\small{The $n=4$ case.}}
\label{fig:n=4}
\end{figure}

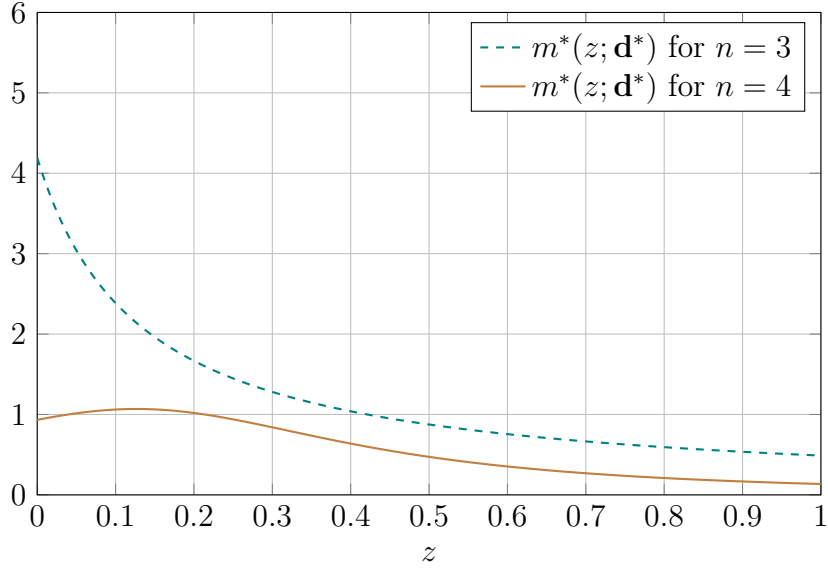
\begin{figure}[h]
  \centering
\begin{tikzpicture}
  \begin{axis}[
    width=12cm,
    height=8cm,
    xlabel={$z$},
    domain=0:1,
    samples=300,
    grid=major,
    legend style={at={(0.98,0.98)},anchor=north east},
    ymin=0, ymax=6,
    xmin=0, xmax=1
  ]

    \addplot[thick, teal, dashed]
      {exp(-0.2328) / (0.1890 + 1.4328 * x)};
    \addlegendentry{$m^*(z;{\bf d}^*)$ for $n=3$}

    \addplot[thick, brown]
      {exp(-1.2626) / (2.3937 * x^2 - 0.6066 * x + 0.3033)};
    \addlegendentry{$m^*(z;{\bf d}^*)$ for $n=4$}
    
  \end{axis}
\end{tikzpicture}
  \caption{Functions $m^*(z;{\bf d}^*)$ for $n=3$ and for $n=4$ for $\bar{H}=0$.}
  \label{fig:m(z) for n=3 and 4}
\end{figure}

\paragraph{Computing the distribution}
The solution gives
\begin{align*}
& a(z;{\bf d}^*) = 3z^2d_1^*+3(1-z)^2d_3^* = 2.3937z^2-0.6066z+0.3033,\\
& \int_0^1\log a(z;{\bf d}^*)\dd z = -1.2626,\\
& m^*(z;{\bf d}^*) = \frac{e^{-\bar{H}-1.2626}}{2.3937z^2-0.6066z+0.3033}.
\end{align*}
We recover the resulting distribution:
\begin{align*}
    t - \underline{\varepsilon} = & e^{\bar{H}+1.2626}\int_0^{F(t)}(2.3937z^2-0.6066z+0.3033)\dd z = \\ 
    = &e^{\bar{H}+1.2626}F(t)[0.7979F(t)^2-0.3033F(t)+0.3033].
\end{align*}
For convenience, normalize $e^{\bar{H}+1.2626}=1$ and $\underline{\varepsilon}=0$. The solution then becomes
\[
F(t) = -0.8557 \Delta(t)^{1/3}
+ 0.1267
+ \frac{0.1293}{\Delta(t)^{1/3}}
\]
where
\[
\Delta(t) = -t + \sqrt{(0.03518 - t)^2 + 0.00345} + 0.03518
\]
with ${\rm supp}(F)=[0,0.798]$. The pdf is given by
\[
f(t) = 
\frac{
\begin{aligned}
&-0.0431t + 0.0431 \sqrt{(t - 0.03518)^2 + 0.00345} + 0.001516 \\
&+ \Delta(t)^{2/3}\left(-0.2852t + 0.2852 \sqrt{(t - 0.03518)^2 + 0.00345} + 0.01003 \right)
\end{aligned}
}{
\Delta(t)^{4/3}\sqrt{(t - 0.03518)^2 + 0.00345}
}
\]
%
This distribution has a unimodal pdf and bimodal hazard rate, see Figure \ref{fig:n=4}. 
See also Figure \ref{fig:m(z) for n=3 and 4} for functions $m^*(z;{\bf d}^*)$ for $n=3$ and $n=4$.

\newpage

\phantomsection
\addcontentsline{toc}{section}{References}

\newpage
\appendix

\end{document}